\documentclass[prxquantum,reprint, superscriptaddress]{revtex4-1}
\usepackage{amsmath,amscd,amsthm,amssymb}
\usepackage{mathrsfs}
\usepackage{graphicx}
\usepackage{epsf,epstopdf}
\usepackage{caption3}
\usepackage[all]{xy}
\usepackage{tikz}
\usepackage{enumerate}
\usepackage{mathtools}
\usepackage[unicode=true,pdfusetitle, bookmarks=true,bookmarksnumbered=false,bookmarksopen=false, breaklinks=false,pdfborder={0 0 0},backref=false,colorlinks=false] {hyperref}
\hypersetup{colorlinks,linkcolor=myurlcolor,citecolor=myurlcolor,urlcolor=myurlcolor}
\usepackage{graphics, times, braket, color, bm}
\usepackage[up]{subfigure}
\usepackage{cleveref}
\usepackage{multirow}
\usepackage{array}
\usepackage{xcolor}

\definecolor{myurlcolor}{rgb}{0,0,0.7}

\def\be{\begin{equation}}
\def\ee{\end{equation}}
\def\bea{\begin{eqnarray*}}
\def\eea{\end{eqnarray*}}

\theoremstyle{plain}
\newtheorem{thrm}{\protect\theoremname}
\newtheorem{defi}[thrm]{Definition}

\newtheorem{rema}[thrm]{Remark}
\providecommand{\theoremname}{Theorem}

\newcommand*{\myproofname}{Proof}

\makeatother

\def\cD{\mathcal{D}}
\def\cF{\mathcal{F}}\def\cH{\mathcal{H}}

\def\cR{\mathcal{R}}

\def\rD{\mathrm{D}}

\def\rM{\mathrm{M}}
\def\rR{\mathrm{R}}

\theoremstyle{definition}

\theoremstyle{remark}

\graphicspath{{figs/}}

\begin{document}

 \author{Sunho Kim}
 \email{kimsunho81@hrbeu.edu.cn}
 \affiliation{School of Mathematical Sciences, Harbin Engineering University, Harbin 150001, China}

  \author{Chunhe Xiong}
 \email{Corresponding author: xiongchunhe@csu.edu.cn}
\affiliation{School of Mathematics and Statistics, HNP-LAMA, Central South University, Changsha 410083, China}

 \author{Junde Wu}
 \email{Corresponding author: wjd@zju.edu.cn}
 \affiliation{School of Information Engineering, Minnan University of Sciences and Technology, Quanzhou 362700, PR~China}
\affiliation{School of Mathematical Sciences, Zhejiang University, Hangzhou 310027, PR~China}

\title{Attainable Boundaries versus Operational Equivalence: Dual Benchmarks for the Hierarchical Assessment of Quantum Resources}
\begin{abstract}
While conventional resource theories quantify utility relative to an idealized free set, practical implementations are confined to a
specific supply environment $\mathcal{E}$ that encompasses all states realizable under current experimental and noise constraints, beyond
mere free states.  We propose a framework for evaluating the relative utility of arbitrary states against $\mathcal{E}$, focusing on two
representations: the set reachable by free operations, defining the attainable boundary, and the set from numerical optimization,
defining the criterion of operational equivalence.
Our framework establishes two core evaluation methodologies. First, by assessing a state's performance relative to the reference set under
its individually optimized strategy, we provide a strategy-dependent benchmark, thereby quantifying the achievable performance. Second, by
considering the optimal strategy for each state individually, we derive a computable lower bound on this superiority, enabling the
quantification of strategy-independent relative performance irrespective of specific experimental setups.

To translate these two methodologies into concrete numerical quantities, we introduce relative measures based on the generalized
robustness measure and the relative-entropy measure.  Building on this, we theoretically establish a quantitative hierarchical analysis
under the two aforementioned strategies---namely, the strategy-dependent benchmark and the strategy-independent lower
bound.

In a thermodynamic environment, the minimum work cost required to synthesize a target state is given by the free-energy difference
between the states; however, this is merely a lower bound derived from absolute utility, without taking the distance between states
into account.  By applying the two methodologies of relative evaluation introduced above, we show that this lower bound
corresponds to a strategy-independent quantification.  Furthermore, through two scenarios that consider the information-theoretic
distance and the thermodynamic distance, respectively, we quantify the individually optimized cost---required for synthesizing the
target state from the attainable optimal states---and thereby derive upper bounds on the minimum work cost.  Based on these scenarios, we
analyze the work cost for recovering quantum entanglement in a noisy environment.  Our analysis reveals that, in the low-noise regime at
moderate and low temperatures, the upper bound based on the thermodynamic distance is tighter than that based on the information-theoretic distance.
\end{abstract}
\maketitle

\section{Introduction and Motivation}

Quantum Resource Theory (QRT) stands as the axiomatic bedrock of modern quantum information science, providing a formal framework for characterizing useful non-classical attributes---such as entanglement \cite{Horodecki2, Vedral}, coherence \cite{Baumgratz}, magic states \cite{Veitch2014}, and asymmetry \cite{GourSpekkens2008}---as quantifiable resources.
The field traces its origins to the late 1990s with the formalization of entanglement manipulation under Local Operations and Classical Communication (LOCC) \cite{Bennett, Bennett1996}, establishing the foundational principles of entanglement purification and teleportation.
Foundational works established that entanglement cannot be generated under LOCC, laying the groundwork for the concept of "free operations" \cite{Horodecki2}. The early 2000s saw pivotal developments; Oppenheim, Horodecki, and Ekert \cite{Oppenheim2002, Horodecki1} established rigorous mathematical conditions for entanglement transformations, while Schumacher \cite{Schumacher2003} first coined the term "resource theory," solidifying its identity as an independent discipline. A major leap occurred in 2008 when Gour and Spekkens \cite{GourSpekkens2008} generalized the framework beyond entanglement to quantum reference frames, revealing that any symmetry constraint could define a resource theory. This demonstrated that QRTs are universal languages for classifying useful properties under physical constraints. By the mid-2010s, the integration of coherence theory \cite{Baumgratz} and magic state theory \cite{Veitch2014} successfully incorporated elements essential for practical quantum computation into the QRT formalism.

The pinnacle of this evolution is the proof of asymptotic reversibility by Brand\~{a}o and Gour \cite{Brandao2015-1}, which established the relative entropy of resource as the unique asymptotic quantifier when resource transformations are reversible under optimal protocols with infinite copies.
This endowed QRT with mathematical completeness, a structural foundation later synthesized by Chitambar and Gour's comprehensive review \cite{Chitambar}, which standardized the roles of free states, free operations, and resource monotones. Concurrently, the operational meaning of robustness measures was clarified by Takagi et al. \cite{Takagi} and Oszmaniec and Biswas \cite{Oszmaniec2019}, linking generalized robustness to maximal advantages in subchannel discrimination tasks. Furthermore, the development of one-shot resource theory \cite{Winter, Napoli} provided a framework for quantifying resources in the non-asymptotic regime, addressing the limitations of infinite-copy assumptions.

However, practical quantum systems operate under constraints that invalidate the assumptions of asymptotic idealism. Ideal free operations are rarely permitted; instead, the physical scope of provisionable resource states is strictly limited by specific experimental environments (e.g., temperature, defect density, decoherence times) and available operational sets (e.g., specific gate libraries). We define the set $\mathcal{E}$ as the ensemble of resource states realizable under these constrained supply environments---representing the practical boundary of what is physically permissible given current conditions and operational capabilities. While recent efforts have extended QRT towards one-shot and finite-round frameworks
\cite{Winter, Napoli, Regula, Zhu2025, Spee2017, Kaur2021}, instrumentation-based theories \cite{Buscemi2020, Ji2024, Ghai2025}, thermodynamic resource theories \cite{Brandao2013, Gour},
and dynamical resource theories \cite{Gour1, Gour2, Saxena, Kim}, existing studies largely retain the absolute set of free states $\mathcal{F}$ as the benchmark. Consequently, a theoretically superior state $\rho_{\text{out}} \notin \mathcal{E}$ with a higher resource quantifier $\rR(\rho_{\text{out}})$ may incur prohibitive overhead or performance degradation when transformed into or utilized compared to a less pure but readily available state $\rho_{\text{in}} \in \mathcal{E}$. For instance, a highly pure state incompatible with the native gate set may exhibit lower practical efficiency than a sub-optimal state tailored to the current operational constraints.

Conversely, a paradoxical scenario may arise where $\rho_{\text{out}}$, despite being outside the supply boundary $\mathcal{E}$, offers superior practical utility. If $\rho_{\text{out}}$ possesses high structural affinity with a specific operational environment---defined by the implementation of key algorithmic gates, control architecture involving measurement and feedback, memory coherence times, and the available gate library---it may be utilized immediately without costly conversions. In contrast, $\rho_{\text{in}}$ might require additional operations (transformations, distillation) that accumulate performance losses. Thus, despite a lower theoretical resource quantifier, $\rho_{\text{out}}$ can achieve higher practical efficacy when balancing operational costs and final performance.

We therefore establish $\mathcal{E}$ as the new coordinate system for resource evaluation. Our core inquiry is: \textit{Among states supplied within $\mathcal{E}$ and theoretically superior states existing outside $\mathcal{E}$, which exhibits higher relative efficiency in actual task performance under current constraints?} This shifts the focus from absolute state values to practical performance under physical limitations. In this context, we propose a relative resource evaluation framework that incorporates auxiliary reference sets---specifically, those reachable via free operations and those yielding equivalent performance in targeted tasks---to formalize the Measure of Relative Superiority. Crucially, this measure preserves the core axioms of monotonicity, convexity, and faithfulness within this relative context, providing a robust tool for practical resource assessment.

This measure supports a dual benchmarking scheme contingent on the reference set configuration. First, it facilitates strategy-dependent benchmarking, discerning whether an external state $\rho_{\text{out}}$ is practically superior to internal states $\rho_{\text{in}}$ under specific operational constraints (e.g., control structures or decoherence channels). Second, it enables strategy-independent benchmarking across the entire feasible strategy space, assessing universal performance dominance unconstrained by specific environmental factors. This duality allows a single metric to bridge task-specific optimization and universal resource valuation, in the spirit of one-shot resource formation and distillation \cite{Winter, Napoli, Brandao2015-1}.

This work resolves the central challenge of Strategy-Dependent Operational Efficacy---a core issue confronting QRTs. By establishing $\mathcal{E}$ as an explicit reference benchmark, we bridge the gap between theoretical ideals and practical implementations, providing a novel standard for fair comparisons across heterogeneous quantum supply environments. This is not merely a modification of existing theories, but an essential evolution enabling QRT to transcend asymptotic elegance in favor of tangible utility. Central to this advancement is a unified framework for quantifying relative resource efficiency from multiple perspectives. Leveraging the relative perspective, we propose relative-entropy and generalized-robustness-based measures, and perform a hierarchical analysis within the two proposed benchmarking frameworks.  This analysis simultaneously provides a strategy-dependent benchmark for immediate performance and a strategy-independent lower bound for
intrinsic potential.  The practical validity of this framework is demonstrated by analyzing the minimum work cost of restoring a target
state in a thermodynamic environment, from the two perspectives of the relative entropy and the newly introduced thermodynamically anchored robustness (TAR).

\section{Quantum resource theory in relative terms}

We examine the relativity of resources among quantum states from the perspective of resource theory.  Specifically, we characterize the
relative resources between a given set of quantum devices (states, channels, measurements, etc.) in a specific noisy environment and any
devices outside that set.  To clarify this relative perspective, we consider concrete physical scenarios.

Owing to practical limitations such as cost or technological and environmental constraints, preparing arbitrary quantum devices is
often impossible or highly challenging.  By contrast, certain facilities can, at a cost, prepare devices that possess specific
quantum capabilities, albeit in a limited fashion.

Standard resource theories provide powerful absolute quantifiers such as relative entropy and the robustness measure, and the conventional
approach of anchoring monotones to a fixed set of free states $\mathcal{F}$ has been indispensable in establishing the theoretical
bedrock of the field.  Building upon this robust foundation, this work aims to broaden the analytical perspective.  Specifically, we
define $\mathcal{E}$ as the set of resource states that are preparable via currently operational protocols or are known to be
advantageous for storage and distribution.  We then posit this set $\mathcal{E}$ itself as a new reference framework.

Approaching the problem from this vantage point reveals that the resource value of an arbitrary state cannot be adequately assessed
merely as a distance from the fixed free set $\mathcal{F}$; rather, it necessitates a holistic illumination through the lens of
practical utility relative to $\mathcal{E}$.  To illustrate, consider an entangled state $\sigma$, encoded in a specific nonlocal basis,
which is imported into the laboratory but originates from an incompatible quantum environment.  This state $\sigma$ lies outside
the scope of $\mathcal{E}$ (i.e., $\sigma \notin \mathcal{E}$). Our primary interest shifts from quantifying the intrinsic worth of
$\sigma$ to evaluating its efficiency relative to the states within $\mathcal{E}$.

Furthermore, we propose the following practical scenario.  Suppose one considers the work cost of restoring a specific target state
from a given supply environment $\mathcal{E}$.  Depending on the perspective adopted, we can analyze the relative work cost with
respect to a reference set $\mathcal{R}(\mathcal{E})$, which is taken to be either the set of all states reachable from $\mathcal{E}$ via free operations or the
set of all states that possess efficiency equivalent to that of $\mathcal{E}$.  Here, the relative efficiency is characterized
through the relative gap with respect to an extension of $\mathcal{E}$.  This sheds light on resources from the perspective of
task implementation and provides a decisive tool for re-evaluating the practical worth of states traversing the boundaries of
$\mathcal{E}$, in terms of their operational efficiency within the current laboratory context.

Accordingly, this paper proposes a framework for quantifying the relative efficiency of resources against tailored reference
benchmarks.  Specifically, depending on the objective and environmental scenario, we introduce the following two $\mathcal{E}$-inclusive sets.
\begin{enumerate}
    \item \textbf{$\mathcal{R}^{FO}(\mathcal{E})$}: The set of states reachable via free operations (e.g., LOCC). This defines the limits of transformation achievable within the current operational environment.
    \item \textbf{$\mathcal{R}^{\max}(\mathcal{E})$}: The set of states that are inaccessible via free operations but exhibit equivalent performance to the optimal states in $\mathcal{E}$ for target quantum tasks (e.g., teleportation, metrology). This reflects a perspective that prioritizes practical efficacy over mathematical transformability.
\end{enumerate}
This multi-layered approach enables a nuanced valuation of resources that accounts for transformability (Reachability), practical efficacy (Performance), and environmental fitness.

The primary significance of this work lies in shifting the paradigm of quantum resource quantification from an absolute to a relative coordinate system. The framework allows for the coexistence of the traditional evaluation scheme, which prioritizes mathematical axioms such as convexity and monotonicity as in conventional monotones, and the relative evaluation scheme, which assesses resources based on operational efficiency relative to a user-defined environment ($\mathcal{E}$) and operationally or task-specific baselines ($\mathcal{R}^{FO}(\mathcal{E}), \mathcal{R}^{\max}(\mathcal{E})$). In particular, by distinguishing between the reachability via free operations ($\mathcal{R}^{FO}(\mathcal{E})$) and performance-equivalent states ($\mathcal{R}^{\max}(\mathcal{E})$), users can select or combine these two schemes depending on their research objectives or experimental environment. This provides a systematic method to identify resource values that absolute metrics fail to capture. Consequently, this work not only complements existing resource theories but also offers a versatile and decisive benchmarking tool for comparing disparate quantum platforms and optimizing resource allocation in real-world quantum information processing tasks.


\begin{figure}
\includegraphics[width=3.4in]{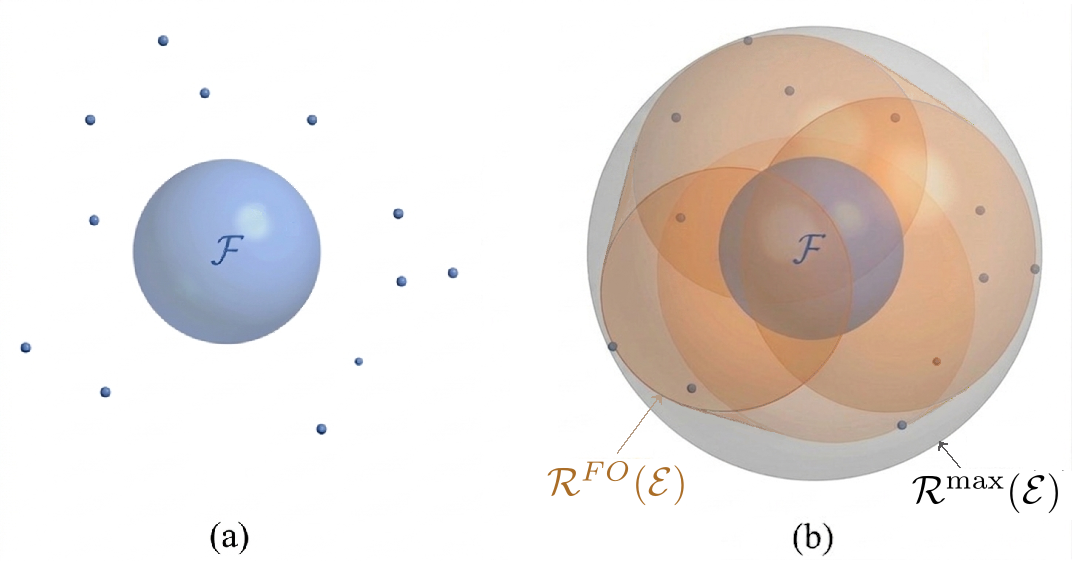}
\caption{(a) Let $\mathcal{E}$ denote the set of currently accessible or preparable states---be they finite, countably infinite, or continuous. While this set can generally be assumed to encompass all free states, no universal quantitative method is known that guarantees both convexity and monotonicity under free operations when attempting to quantify relative resources relative to $\mathcal{E}$. (b) To resolve this issue, we construct two reference sets. The first is $\cR^{FO}(\mathcal{E})$, defined as the subset of states reachable from $\mathcal{E}$ exclusively via free operations. The second is $\cR^{\max}(\mathcal{E})$, comprising all states that, although inaccessible via free operations, yield equivalent performance in the given quantum task.}
\label{fig1}
\end{figure}

Next, we turn to the specific task of constructing the reference sets $\mathcal{R}^{FO}(\mathcal{E})$ and $\mathcal{R}^{\max}(\mathcal{E})$. We propose a systematic framework for their physically motivated construction and for the subsequent evaluation of relative resource advantages.

We define $\mathcal{E}$ as the restricted set of quantum states preparable within given technological or economic constraints; by construction, this set is assumed to encompass all free states.
The central question we address is how to quantify the relative resource advantage of quantum states lying outside this currently preparable set $\mathcal{E}$.
Guided by the foundational principles of quantum resource theories, any meaningful quantification is recommended to satisfy three fundamental criteria: faithfulness, monotonicity, and convexity.
These requirements emerge naturally from physical considerations and provide mathematically robust conditions for any viable resource measure.

However, we identify a significant obstacle to this approach: if $\mathcal{E}$ lacks convexity or exhibits severe geometric distortion relative to the free set, constructing a measure that preserves monotonicity under free operations becomes inherently problematic (see (a) of FIG.\ref{fig1}).
To overcome this challenge, we develop an extension procedure for $\mathcal{E}$.
The key insight lies in carefully expanding the preparable set while rigorously maintaining:
(i) the physical plausibility of the extended set;
(ii) the preservation of essential resource-theoretic properties; and
(iii) operational relevance to quantum tasks.
We present a systematic methodology for this extension, derived from concrete physical scenarios that guide the generalization process.
This proposed extension offers a principled approach to comparing quantum resources across disparate technological platforms and preparation constraints.

As $\mathcal{R}^{FO}(\mathcal{E})$ comprises all quantum states reachable from $\mathcal{E}$ via free operations, it is formally defined as:
\be
\mathcal{R}^{FO}(\mathcal{E}) = \{ \rho \mid \rho = \sum_i \Lambda_i(\sigma_i),\; \forall \Lambda_i \in FO,\; \sigma_i \in \mathcal{E} \},
\ee
where $FO$ denotes the set of all free operational protocols (see (b) of FIG.\ref{fig1}).
By construction, this set inherently guarantees convexity.
Conceptually, it represents the ensemble of states that are in principle preparable through free operations, abstracting away from practical experimental constraints such as noise or device limitations.

The scope of $\mathcal{R}^{\max}(\mathcal{E})$ is inherently task-dependent by definition.
We formalize this set through the following reasoning.
Consider a quantum processing task characterized by a suitable resource quantifier $\rM$, chosen in accordance with established quantum resource theories.
If two quantum states $\rho$ and $\rho'$ satisfy $\rM(\rho)<\rM(\rho')$, we conventionally judge the intrinsic resource content of $\rho'$ to be dominant over that of $\rho$ for this specific task.
It is crucial to note, however, that this numerical advantage does not necessarily stem from a comparison under a single, fixed operational strategy (defined by a specific channel or measurement protocol).
Generally, the measure $\rM$ is recognized as the maximum achievable efficacy relative to free states.
This implies that, under certain strategies, the effect derived from $\rho$ might yield greater advantages than that from $\rho'$.
Quantifying resources via the maximal efficacy across all feasible strategies is a fundamental guiding principle frequently adopted in conventional resource theories; consequently, applying this same principle to the quantification of relative resources is well justified.

Following this principle, if the maximal effect achievable from any state $\rho$ outside the set $\mathcal{E}$ (considering all viable strategies) fails to reach the maximal effect attained by states within $\mathcal{E}$, then we cannot deem the resources of $\rho$ to be dominant over $\mathcal{E}$, irrespective of the specific strategy employed.
Therefore, we quantify relative resources based on the set of all quantum states whose maximal efficacy across all practicable strategies is inferior to that of the states in $\mathcal{E}$ (see (b) of FIG.~\ref{fig1}).
The extended set is formally defined as:
\be
\mathcal{R}^{\max}(\mathcal{E}) = \{\rho\mid\rM(\rho)\leq \rM(\mathcal{E})\},
\ee
where $\rM(\mathcal{E}) = \max\{\rM(\sigma)\mid\sigma \in \mathcal{E}\}$. This extended set provides an ideal foundation for constructing resource measures consistent with the fundamental criteria of resource theories, namely faithfulness, monotonicity, and convexity.
Building upon this, we present a comprehensive framework for quantifying relative resources.

Since free operations are non-resource-increasing, any state reachable via $FO$ satisfies $\rM(\rho) \leq \rM(\mathcal{E})$.
Hence, we obtain the hierarchy $\mathcal{R}^{FO}(\mathcal{E}) \subseteq \mathcal{R}^{\max}(\mathcal{E})$, indicating that reachability implies performance-equivalence.

Building upon this, we present an operational resource-theoretic framework for quantifying relative resources, utilizing extended sets derived from $\mathcal{E}$---specifically $\mathcal{R}^{FO}(\mathcal{E})$ and $\mathcal{R}^{\max}(\mathcal{E})$---as reference benchmarks.
Here, the reference set $\cR(\mathcal{E})$ is deemed equivalent to $\mathcal{E}$; it represents an adaptively extended set (e.g., $\mathcal{R}^{FO}(\mathcal{E})$ or $\mathcal{R}^{\max}(\mathcal{E})$) whose scope is reset according to the specific environment and task at hand.

We now define a relative measure, denoted $\rM_{\mathcal{E}}$, to quantify the supremacy of resources relative to $\cR(\mathcal{E})$, subject to the following axioms:

\textbf{(R1)} Faithfulness: $\rM_{\mathcal{E}}(\rho)\geq 0$, with equality $\rM_{\mathcal{E}}(\rho)=0$ if and only if $\rho \in \cR(\mathcal{E})$.

\textbf{(R2)} Monotonicity under any free operation $\Lambda$:
\be
\rM_{\mathcal{E}}(\Lambda(\rho)) \leq \rM_{\mathcal{E}}(\rho)
\ee
for any free operations $\Sigma$.

\textbf{(R3)} Convexity: For any probability distribution $\{q_i\}$ and states $\{\rho_i\}$, the following holds:
\be
\rM_{\mathcal{E}}\biggl(\sum_i q_i\rho_i\biggr) \leq \sum_i q_i\rM_{\mathcal{E}}(\rho_i).
\ee
Any function $\rM_{\mathcal{E}}$ satisfying conditions (R1-R3) is termed a Measure of Relative Superiority.








\section{Hierarchy via relative measures based on robustness and relative entropy}

Here we carry out a hierarchical analysis, based on several measures, for the relative resource theory with respect to the supplying
reference set $\mathcal{E}$ introduced by the authors.  To this end, we extend our relative measures via robustness- and
relative-entropy-based quantifiers that satisfy the following definitions:
\begin{align}
\rD(\rho) &= \min_{\sigma\in \cF(\cH)}\rD(\rho\|\sigma), \quad \textmd{and}\\
\rR(\rho) &= \min_{\tau\in \cD(\cH)}\Big\{s\geq0\Big|\frac{\rho+s\tau}{1+s}\in \cF(\cH)\Big\}.
\end{align}

In this section, we establish two distinct closed convex sets within the given supply environment $\mathcal{E}$, both representing notions of ``equivalent resources,'' to facilitate the evaluation of relative resources tailored to the subchannel discrimination task.
First, we define the Free-Operation Reachable Set $\mathcal{R}^{FO}(\mathcal{E})$. As previously discussed, this set comprises all states reachable from $\mathcal{E}$ via the allowed free operations (FO). It represents the operational boundary imposed by current physical constraints (e.g., available gate libraries) and is therefore task-independent.
Second, we introduce the Max-Robustness Equivalence Set $\mathcal{R}^{\max}_r(\mathcal{E})$ and the Max-Relative Entropy Equivalence Set $\mathcal{R}^{\max}_d(\mathcal{E})$, specific to the subchannel discrimination task at hand. These sets define the boundaries of states deemed to offer an operational advantage equivalent to $\mathcal{E}$ in this particular context, respectively, as follows:
\begin{align}
\mathcal{R}^{\max}_r(\mathcal{E}) &= \Bigl\{\, \rho \;\Big|\; \rR(\rho) \leq \rR(\mathcal{E}) \,\Bigr\}, \quad \textmd{and}\\
\mathcal{R}^{\max}_d(\mathcal{E}) &= \Bigl\{\, \rho \;\Big|\; \rD(\rho) \leq \rD(\mathcal{E}) \,\Bigr\},
\end{align}
where
$\rR(\mathcal{E}) := \max\{\,\rR(\sigma) \mid \sigma \in \mathcal{E}\,\}$
and
$\rD(\mathcal{E}) := \max\{\,\rD(\sigma) \mid \sigma \in \mathcal{E}\,\}$
are the maximal robustness and relative-entropy values attainable
within the set $\mathcal{E}$.

Based on these sets, we define the following measures to evaluate relative efficiency:
\begin{defi}\label{def1}
Given a state $\rho$, we define the robustness measures of the resource with respect to the supply set $\mathcal{E}$ as follows:
\be\label{eq1}
\rR^{FO}_{\mathcal{E}}(\rho) = \min_{\tau\in \cD(\cH)}\Bigg\{s\geq0\Big|\frac{\rho+s\tau}{1+s}\in \mathcal{R}^{FO}(\mathcal{E})\Bigg\},
\ee
and
\be\label{eq2}
\rR^{\max}_{\mathcal{E}}(\rho) = \min_{\tau\in \cD(\cH)}\Bigg\{s\geq0\Big|\frac{\rho+s\tau}{1+s}\in \mathcal{R}^{\max}_r(\mathcal{E})\Bigg\}.
\ee
\end{defi}

\begin{defi}\label{def1}
Given a state $\rho$, we define the relative entropy measures of the resource with respect to the supply set $\mathcal{E}$ as follows:
\be\label{eq3}
\rD^{FO}_{\mathcal{E}}(\rho) = \min_{\sigma\in \mathcal{R}^{FO}(\mathcal{E})}\rD(\rho\|\sigma),
\ee
and
\be\label{eq4}
\rD^{\max}_{\mathcal{E}}(\rho) = \min_{\sigma\in \mathcal{R}^{\max}_d(\mathcal{E})}\rD(\rho\|\sigma).
\ee
\end{defi}
For the compact convex set $\mathcal{R}^{FO}(\mathcal{E})$ reachable under free operations, the measures defined above in terms of robustness and relative entropy
satisfy monotonicity (R2) under free operations, in the same manner as the existing measures do. Moreover, from the definition,
faithfulness (R1) and convexity (R3) are obtained naturally.

A fundamental property of free operations (FO) is that they cannot increase the amount of resource; consequently, the application of any FO to a state within the supply environment $\mathcal{E}$ cannot yield a state that outperforms $\mathcal{E}$ in any operational task.
This guarantees the inclusion $\cF(\cH) \subseteq \mathcal{R}^{FO}(\mathcal{E}) \subseteq \mathcal{R}^{\max}(\mathcal{E})$, where $\mathcal{R}^{FO}(\mathcal{E})$ denotes the set of states reachable from $\mathcal{E}$ via FO, and $\mathcal{R}^{\max}(\mathcal{E})$ denotes the set of states offering equivalent or lesser performance in a given task.
Crucially, this hierarchical relationship between the reference sets induces a corresponding ordering between our proposed measures.
From our definitions, we directly obtain the inequality: for any $\rho \in \cD(\cH)$
\begin{equation}
\rR(\rho) \geq \rR^{FO}_{\mathcal{E}}(\rho) \geq \rR^{\max}_{\mathcal{E}}(\rho),
\end{equation}
and
\begin{equation}
\rD(\rho) \geq \rD^{FO}_{\mathcal{E}}(\rho)\geq\rD^{\max}_{\mathcal{E}}(\rho).
\end{equation}

A crucial distinction exists between the two reference sets defined in this work. While $\mathcal{R}^{FO}(\mathcal{E})$ constitutes the
set of states physically reachable from the supply environment $\mathcal{E}$ under free operations, $\mathcal{R}^{\max}(\mathcal{E})$
is defined purely numerically on the basis of each measure. Consequently, membership in $\mathcal{R}^{\max}(\mathcal{E})$ does not
guarantee physical reachability within the present constrained supply environment. Furthermore, the experimental approximation of
$\rD^{\max}_{\mathcal{E}}(\rho)$ or $\rR^{\max}_{\mathcal{E}}(\rho)$ for an arbitrary state $\rho$ may be infeasible, as it requires
characterizing the boundary of a high-dimensional convex set defined by a maximization principle. Nevertheless, the numerical
relationship established in the following theorem addresses this issue. A detailed proofs of the following two theorems are provided in Appendix~\ref{Appendix A}.

\begin{thrm}
\label{thm1} For any state $\rho$, we have that
\be\label{r-max}
\rR^{\max}_{\mathcal{E}}(\rho) = \max\Big\{\frac{\rR(\rho)-\rR(\mathcal{E})}{1+\rR(\mathcal{E})},0\Big\}.
\ee
\end{thrm}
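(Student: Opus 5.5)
We prove the formula by establishing matching upper and lower bounds on $\rR^{\max}_{\mathcal{E}}(\rho)$. Throughout we write $r:=\rR(\rho)$ and $r_0:=\rR(\mathcal{E})$. The only properties of the generalized robustness $\rR$ we will use are two standard consequences of its definition and of the convexity of $\cF(\cH)$.

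\textbf{Convexity.} $\rR$ is convex.

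\textbf{Mixing bound.} For any states $\rho,\tau$ and $s\ge 0$,
\begin{equation*}
\rR\Big(\frac{\rho+s\tau}{1+s}\Big)\ge \frac{\rR(\rho)-s}{1+s}.
\end{equation*}
To see this, suppose $\frac{\rho+s\tau}{1+s}+t\tau' = (1+t)\sigma$ with $\sigma\in\cF(\cH)$. Multiplying by $1+s$ gives $\rho+ s\tau+t(1+s)\tau' = (1+s)(1+t)\sigma$. This exhibits $\rho$ mixed with a state at weight $s+t+st$, so $\rR(\rho)\le s+t+st$. Rearranging yields $t\ge (\rR(\rho)-s)/(1+s)$.

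\textbf{Case $r\le r_0$.} Here $\rho\in\mathcal{R}^{\max}_r(\mathcal{E})$, so $s=0$ is feasible in \eqref{eq2}. Since $s\ge 0$ always, $\rR^{\max}_{\mathcal{E}}(\rho)=0$, which matches the formula.

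\textbf{Case $r>r_0$, lower bound.} Let $s$ be feasible for \eqref{eq2}, so $\rR\big(\frac{\rho+s\tau}{1+s}\big)\le r_0$. The mixing bound gives $(r-s)/(1+s)\le r_0$. Solving, $s\ge (r-r_0)/(1+r_0)$.

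\textbf{Case $r>r_0$, upper bound.} We need a $\tau$ attaining this value of $s$. Take an optimal decomposition for $\rR(\rho)$: $\rho + r\tau^* = (1+r)\sigma^*$ with $\sigma^*\in\cF(\cH)$. Set $s^*:=(r-r_0)/(1+r_0)$ and choose $\tau=\tau^*$. Then $\frac{\rho+s^*\tau^*}{1+s^*}$ lies on the segment between $\rho$ and $\tau^*$. Write it as a convex combination of $\rho$ and $\sigma^*$: from $\tau^*=((1+r)\sigma^*-\rho)/r$ we get
\begin{equation*}
\rho+s^*\tau^* = \Big(1-\frac{s^*}{r}\Big)\rho+\frac{s^*(1+r)}{r}\sigma^*.
\end{equation*}
Normalizing by $1+s^*$, the weight on $\rho$ is
\begin{equation*}
\lambda=\frac{r-s^*}{r(1+s^*)}\in[0,1].
\end{equation*}
The convexity of $\rR$, together with $\rR(\sigma^*)=0$, gives $\rR\le \lambda r=(r-s^*)/(1+s^*)$. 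With $s^*=(r-r_0)/(1+r_0)$ one checks $r-s^*=r_0(1+r)/(1+r_0)$ and $1+s^*=(1+r)/(1+r_0)$, so this bound equals exactly $r_0$. Hence the state lies in $\mathcal{R}^{\max}_r(\mathcal{E})$ and $s^*$ is feasible.

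Combining the two bounds gives \eqref{r-max}.

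\textbf{Expected obstacle.} The main obstacle is the attainability step. One must check that the constraint in \eqref{eq2} involves only the single number $\rR$ of the mixture, which makes $\mathcal{R}^{\max}_r(\mathcal{E})$ a sublevel set, and that the optimal $\tau^*$ exists. Existence holds because $\rR$ is attained by compactness of $\cD(\cH)$ and closedness of $\cF(\cH)$. We also need the minimum in \eqref{eq2} to be attained, which follows from the same compactness together with closedness of the sublevel set, since $\rR$ is lower semicontinuous.
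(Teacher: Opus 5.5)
Your proof is correct and follows essentially the same route as the paper. Your mixing bound is the paper's concatenation-of-decompositions argument for the lower bound, and for the upper bound you use the same witness $\rho'=\frac{\rho+s^*\tau^*}{1+s^*}$ built from an optimal robustness witness $\tau^*$ of $\rho$; the only cosmetic difference is that you certify $\rR(\rho')\le \rR(\mathcal{E})$ via convexity of $\rR$, whereas the paper writes the explicit identity $\frac{\rho+\rR(\rho)\tau^*}{1+\rR(\rho)}=\frac{\rho'+\rR(\mathcal{E})\tau^*}{1+\rR(\mathcal{E})}$, which is exactly what your convexity step unpacks to (both free states there are the same $\sigma^*$, so convexity of $\cF(\cH)$ is not actually needed).
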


\begin{thrm}
\label{thm2} For any state $\rho$, we have that
\be\label{d-max}
\rD^{\max}_{\mathcal{E}}(\rho) = \max\Big\{\rD(\rho)-\rD(\mathcal{E}), 0\Big\}.
\ee
\end{thrm}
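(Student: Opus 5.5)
The plan is to split into two cases and then prove matching upper and lower bounds in the nontrivial case. \emph{Case 1:} $\rD(\rho)\leq\rD(\mathcal{E})$. Then $\rho\in\mathcal{R}^{\max}_d(\mathcal{E})$, so choosing $\sigma=\rho$ in \eqref{eq4} gives $\rD^{\max}_{\mathcal{E}}(\rho)=0$, which is the right-hand side of \eqref{d-max}. \emph{Case 2:} $\rD(\rho)>\rD(\mathcal{E})$. Here I need
\begin{equation*}
\min_{\sigma\in\mathcal{R}^{\max}_d(\mathcal{E})}\rD(\rho\|\sigma)=\rD(\rho)-\rD(\mathcal{E}).
\end{equation*}
I will prove ``$\geq$'' and ``$\leq$'' separately.

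For the upper bound I will build a path from $\rho$ down into $\cF(\cH)$. Let $\sigma^*\in\cF(\cH)$ attain $\rD(\rho)=\rD(\rho\|\sigma^*)$. Set $\sigma_t=(1-t)\rho+t\sigma^*$ for $t\in[0,1]$. The map $t\mapsto\rD(\sigma_t)$ is continuous, by continuity of the relative entropy of resource on finite-dimensional state space. It runs from $\rD(\rho)$ at $t=0$ to $0$ at $t=1$. By the intermediate value theorem there is a $t_0$ with $\rD(\sigma_{t_0})=\rD(\mathcal{E})$, so $\sigma_{t_0}$ lies on the boundary of $\mathcal{R}^{\max}_d(\mathcal{E})$. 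It then remains to estimate $\rD(\rho\|\sigma_{t_0})$. Convexity of $\rD(\rho\|\cdot)$ gives only $\rD(\rho\|\sigma_{t_0})\leq t_0\rD(\rho)$. So I would also need $t_0\rD(\rho)\leq \rD(\rho)-\rD(\mathcal{E})$, i.e.\ $\rD(\sigma_{t_0})\leq(1-t_0)\rD(\rho)$. This is exactly convexity of $\rD$ along the segment, using $\rD(\sigma^*)=0$. Hence this direction should close cleanly.

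For the lower bound I must show $\rD(\rho\|\sigma)\geq\rD(\rho)-\rD(\sigma)$ for every $\sigma$ with $\rD(\sigma)\leq\rD(\mathcal{E})$. Equivalently, I need a triangle-type inequality
\begin{equation*}
\rD(\rho)\leq\rD(\rho\|\sigma)+\rD(\sigma).
\end{equation*}
My first attempt is to take $\tau\in\cF(\cH)$ optimal for $\sigma$ and decompose
\begin{equation*}
\rD(\rho\|\tau)=\rD(\rho\|\sigma)+\mathrm{Tr}\,\rho(\log\sigma-\log\tau).
\end{equation*}
I would then try to bound the cross term by $\rD(\sigma\|\tau)$.

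I expect this lower bound to be the main obstacle. The Umegaki relative entropy obeys no triangle inequality in general. The honest available estimate is
\begin{equation*}
\rD(\rho\|\tau)\leq\rD(\rho\|\sigma)+\rD_{\max}(\sigma\|\tau),
\end{equation*}
which follows from $\sigma\leq 2^{\rD_{\max}}\tau$ and operator monotonicity of $\log$. That estimate yields the bound with the max-relative entropy of $\sigma$ in place of $\rD(\sigma)$. So I would first try to exploit the freedom in choosing $\sigma$ on the segment constructed above, for which $\rho$ and $\sigma_{t}$ are comparable, $\sigma_t\geq(1-t)\rho$. If the general inequality cannot be obtained, I would record that \eqref{d-max} holds with ``$\leq$'' in general and with equality under an additional commutativity or additivity assumption.
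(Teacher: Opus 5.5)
You handle Case~1 correctly, and your ``$\le$'' direction is also correct. That part improves on the paper. The paper's upper-bound step asserts that the relative entropy is linear along $\lambda\rho+(1-\lambda)\tau_*$ and that the triangle inequality is saturated there; this is false, and the example below refutes it. Your estimates $\rD(\rho\|\sigma_t)\le t\,\rD(\rho)$ and $\rD(\sigma_t)\le\rD(\sigma_t\|\sigma^*)\le(1-t)\,\rD(\rho)$ are genuine, and together they give $\rD^{\max}_{\mathcal{E}}(\rho)\le\rD(\rho)-\rD(\mathcal{E})$. You do not need the intermediate value theorem: take $t=1-\rD(\mathcal{E})/\rD(\rho)$ directly.

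The gap is the ``$\ge$'' direction, and it cannot be closed because it is false. The paper obtains it only by invoking the ``triangle inequality (chain rule)'' $\rD(\rho\|\sigma)+\rD(\sigma\|\tau)\ge\rD(\rho\|\tau)$, which is exactly the inequality you rightly distrusted. Here is a counterexample:
\begin{itemize}
\item Take the qubit purity theory $\cF=\{\mathbb{I}/2\}$, with $\sigma=\mathrm{diag}(0.9,0.1)$, $\mathcal{E}=\{\mathbb{I}/2,\sigma\}$ and $\rho=|0\rangle\langle 0|$.
\item Then $\rD(\rho)=\ln 2$ and $\rD(\mathcal{E})=\ln 2-h_2(0.9)$, so the claimed value is $h_2(0.9)\approx 0.325$.
\item But $\sigma\in\mathcal{E}\subseteq\mathcal{R}^{\max}_d(\mathcal{E})$ gives $\rD^{\max}_{\mathcal{E}}(\rho)\le\rD(\rho\|\sigma)=\ln(10/9)\approx 0.105$.
\item Equivalently, $\rD(\rho\|\sigma)+\rD(\sigma\|\mathbb{I}/2)\approx 0.474<\ln 2$.
\end{itemize}
The same numbers occur in coherence theory with $\rho=|+\rangle\langle+|$ and $\sigma=0.9|+\rangle\langle+|+0.1|-\rangle\langle-|$.

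Since all states in the purity example commute, your fallback of equality ``under commutativity'' fails as well. Your idea of exploiting the freedom to pick $\sigma$ on the segment is also misdirected, for two reasons. First, a lower bound on a minimum must hold for every feasible $\sigma$, so you have no freedom to choose. Second, this particular $\sigma$ already lies on your segment ($t=0.2$, with $\sigma\ge 0.8\rho$), so comparability does not help.

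What is true is the bound your $\rD_{\max}$ estimate gives:
\[
\rD^{\max}_{\mathcal{E}}(\rho)\ge\rD(\rho)-\sup_{\sigma\in\mathcal{R}^{\max}_d(\mathcal{E})}\ln\bigl(1+\rR(\sigma)\bigr),
\]
using $\min_{\tau\in\cF}\rD_{\max}(\sigma\|\tau)=\ln(1+\rR(\sigma))$. In the example this bound equals $\ln 2-\ln 1.8=\ln(10/9)$ and is tight. So the correct statement is the inequality $\rD^{\max}_{\mathcal{E}}(\rho)\le\max\{\rD(\rho)-\rD(\mathcal{E}),0\}$, which can be strict.
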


By proving the quantitative connection between the established
robustness measure $\rR$ and our proposed measure
$\rR^{\max}_{\mathcal{E}}$, as well as the analogous connection
between the relative-entropy measure $\rD$ and our proposed
measure $\rD^{\max}_{\mathcal{E}}$, we enable the indirect
approximation of both $\rR^{\max}_{\mathcal{E}}(\rho)$ and
$\rD^{\max}_{\mathcal{E}}(\rho)$.  This duality allows us to achieve,
for both measures in common, the following two critical objectives.

First, by evaluating the relative superiority of $\rho$ over
$\mathcal{E}$ under the optimal strategy for $\rho$, we provide a
strategy-dependent benchmark of achievable performance for both the
robustness and the relative-entropy quantifiers.

Second, by considering the optimal strategy for each state
individually, we can provide a computable lower bound for this
superiority, thus quantifying the relative performance based on
numerically accessible optimal success probabilities.  Unlike the
first objective, this approach yields a strategy-independent
numerical bound that reflects the inherent potential of $\rho$
relative to $\mathcal{E}$.

Consequently, for any state $\rho$ and for both measures, we obtain
that
\begin{equation}
\rR^{FO}_{\mathcal{E}}(\rho) \geq \max\Big\{\frac{\rR(\rho)-\rR(\mathcal{E})}{1+\rR(\mathcal{E})},0\Big\},
\end{equation}
and
\begin{equation}
\rD^{FO}_{\mathcal{E}}(\rho) \geq \max\{\rD(\rho)-\rD(\mathcal{E}),\,0\},
\end{equation}
so that the two maximal robustness and relative-entropy values can be evaluated entirely in terms of the corresponding established
quantities.

Furthermore, the numerical relations established in Theorems~\ref{thm1} and~\ref{thm2} ensure that $\rR^{\max}_{\mathcal{E}}$ and $\rD^{\max}_{\mathcal{E}}$ naturally inherit the core properties of $\rR$ and $\rD$, respectively---namely, faithfulness, monotonicity, and convexity---thus guaranteeing that they satisfy the necessary axioms for a valid measure of superiority.

In the preceding sections, we established a formal framework of relative quantifiers for a given supply resource $\mathcal{E}$, and
examined what relations the resulting feasible bounds and operational equivalences respectively impose on the measures $\rD_{\mathcal{E}}$
and $\rR_{\mathcal{E}}$. In practice, quantum technologies, particularly those in the NISQ era, demand concrete, task-specific
validations. To bridge the gap between abstract resource theories and experimental reality, we present a hierarchical analysis of the
minimum work cost required to restore the target state in a thermodynamic noisy environment. Its central idea is interpreted by
multifaceted consideration of the thermodynamic cost incurred when departing from the supply resource $\mathcal{E}$. Such an extension
consolidates the operational relevance of our framework for next-generation quantum devices.


\section{Thermodynamically Anchored Robustness and Work Cost}
\label{sec:thermal_robustness}

In realistic quantum thermodynamic settings, the available supply $\mathcal{E}$ is defined not merely as a pristine thermal bath but as the set of all quantum states that can be physically prepared or accessed in the laboratory. This encompasses the effective thermal reservoir state $\tau_\beta = e^{-\beta H}/Z$, where $Z = \text{Tr}(e^{-\beta H})$ is the partition function, which characterizes the dominant bath temperature. It may also encompass additional thermal components, noise-corrupted entangled states, partially degraded coherences, and experimentally prepared but quality-degraded resource states.

The set of states reachable from \textit{any} state in $\mathcal{E}$ via free operations (thermal operations, energy-conserving unitaries, and ancillary heat exchanges) is denoted $\mathcal{R}^{FO}(\mathcal{E})$. Note that $\tau_\beta \in \mathcal{R}^{FO}(\mathcal{E})$ by definition, but $\mathcal{R}^{FO}(\mathcal{E})$ may be strictly larger than the set reachable from $\tau_\beta$ alone when $\mathcal{E}$ already contains non-thermal resources. We also define $\mathcal{R}^{\max}(\mathcal{E})$ as the set of all states whose free energy does not exceed the maximum free energy available in the supply environment $\mathcal{E}$, i.e., $\mathcal{R}^{\max}(\mathcal{E}) = \{\sigma \mid F(\sigma) \leq \max_{\rho' \in \mathcal{E}} F(\rho')\}$.

In classical thermodynamics, the maximum reversible work extractable from an isothermal process equals the decrease in Helmholtz free energy, $W_{\max}^{\mathrm{extract}} = -\Delta F$. More generally, the second law imposes the work bound
\begin{equation}
    W \;\ge\; F(\rho)-F(\sigma)
    \label{eq:second_law_work}
\end{equation}
for any transformation $\sigma\to\rho$ under thermal operations, with equality only in the reversible macroscopic limit \cite{Brandao2015,Horodecki2013}.

In the quantum resource-theoretic setting, this principle extends naturally: the minimal work required to synthesize a target state $\rho$ from the supply environment $\mathcal{E}$ is the smallest free-energy gap that must be overcome,
\begin{align}
    W_{\min}(\mathcal{E}\to\rho)
    &:=\;
    \min_{\sigma\in\mathcal{R}^{FO}(\mathcal{E})}
    \bigl[F(\rho)-F(\sigma)\bigr]\\
    &= \min_{\sigma\in\mathcal{R}^{\max}(\mathcal{E})}
    \bigl[F(\rho)-F(\sigma)\bigr],\notag
    \label{eq:work_min}
\end{align}
where $F(\cdot)=\langle E\rangle-k_B T S(\cdot)$ is the free energy, with $\beta = (k_B T)^{-1}$.
By definition, $W_{\min}$ is purely a measure of operational efficiency based solely on free-energy differences; it completely disregards the (analytical) reachable distance from the supply environment $\mathcal{R}^{FO}(\mathcal{E})$ to the target state $\rho$.
As a work measure, $W_{\min}$ is therefore blind to the geometric and information-theoretic distance (gap) between $\rho$ and the reachable set, and thus provides merely a fundamental lower bound mandated by the second law.

Crucially, in the single-shot regime, strict inequality generally arises between the actual work cost and $W_{\min}$ due to irreversibility and coherence-locking effects \cite{Brandao2015,Horodecki2013}.
Specifically, coherence-locking implies that the dissipation of quantum resources (such as entanglement and coherence) into equilibrium states results in an excess work cost quantified by the information-theoretic statistical distance (relative entropy).
Therefore, the practical work cost is determined by reflecting the comprehensive and analytical gaps between states.
By introducing metrics that account for these analytical gaps, we establish a tighter and physically more meaningful upper bound on the minimum work cost.
\begin{rema}
It is important to note that our objective is to incorporate the information-theoretic and thermodynamic distances into the work cost, and we do not strictly require the resulting upper bound to be tight.
Rather, this implies that the proposed upper bound can approach the actual work cost more closely than the bare minimum work cost $W_{\min}$ alone, without necessarily being a tight bound in the rigorous sense.
\end{rema}

To rigorously analyze this minimal work cost $W_{\min}$, we introduce two distinct scenarios and derive corresponding upper bounds.
Specifically, we evaluate the work cost from: (i) the state within the supplied environment $\mathcal{E}$ that minimizes the relative entropy distance to $\rho$, and (ii) the state lying on the straight line connecting $\rho$ and the Gibbs state $\tau_\beta$ that minimizes the thermal gap. Through these two scenarios, we establish two independent upper bounds on $W_{\min}$.

Because target states are customarily chosen from those possessing relatively high free energy, we restrict our attention to states satisfying $F(\rho) > F(\tau_\beta)$---that is, states whose Helmholtz free energy exceeds that of the thermal reservoir state of the environment.

Scenario 1: Relative-entropy-based bound via $\rD^{FO}_{\mathcal{E}}(\rho)$.
The first scenario is formulated based on the free-operation relative entropy measure $\rD^{FO}_{\mathcal{E}}(\rho)$.
Since the relative entropy provides the most intuitive quantifier for analyzing differences in free energy, it is physically well-motivated to first convert the initial state into the state $\sigma^* \in \mathcal{E}$ possessing the minimal relative entropy via free operations, and then assess the work cost required to reach the target state $\rho$.
Within this scenario, an upper bound on the minimum work cost is derived as stated in the following theorem.

\begin{thrm}[Work--Relative Entropy Upper Bound]
\label{thm:Relative Entropy Upper Bound}
For any $d$-dimensional target state $\rho$ satisfying $F(\rho) > F(\tau_\beta)$ and $\rD^{FO}_{\mathcal{E}}(\rho)\leq 2\left(\frac{d-1}{d}\right)^2$, then
\begin{align}
&W_{\min}(\mathcal{E}\to\rho)\leq \|H\|_\infty \sqrt{2\rD^{FO}_{\mathcal{E}}(\rho)}\notag \\  &+\frac{1}{\beta}\left[\sqrt{\frac{\rD^{FO}_{\mathcal{E}}(\rho)}{2}}\;\ln(d-1)\;+\;h_2\!\left(\sqrt{\frac{\rD^{FO}_{\mathcal{E}}(\rho)}{2}}\right)\right].
\end{align}
where $h_2(x)=-x\ln x-(1-x)\ln(1-x)$ is the binary entropy.
Hereafter, we denote this upper bound by $W_{\rD}^{\mathrm{UB}}$ for convenience.
\end{thrm}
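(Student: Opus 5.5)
Take $\sigma^*\in\mathcal{R}^{FO}(\mathcal{E})$ to be the minimizer in the definition of $\rD^{FO}_{\mathcal{E}}(\rho)$, so $\rD(\rho\|\sigma^*)=\rD^{FO}_{\mathcal{E}}(\rho)$. It exists because $\mathcal{R}^{FO}(\mathcal{E})$ is compact and the relative entropy is lower semicontinuous. Since $W_{\min}$ is a minimum over $\mathcal{R}^{FO}(\mathcal{E})$, we get $W_{\min}(\mathcal{E}\to\rho)\le F(\rho)-F(\sigma^*)$. It therefore suffices to bound the free-energy difference between $\rho$ and this single state. We split it as
\begin{equation}
F(\rho)-F(\sigma^*)=\bigl[\mathrm{Tr}(H\rho)-\mathrm{Tr}(H\sigma^*)\bigr]+\frac{1}{\beta}\bigl[S(\sigma^*)-S(\rho)\bigr],
\end{equation}
and bound the energy term and the entropy term separately, each through the trace distance $T=\tfrac12\|\rho-\sigma^*\|_1$.

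\textbf{Step 1 (Pinsker).} Pinsker's inequality in nats gives $\tfrac12\|\rho-\sigma^*\|_1^2\le\rD(\rho\|\sigma^*)$. Hence $\|\rho-\sigma^*\|_1\le\sqrt{2\rD^{FO}_{\mathcal{E}}(\rho)}$ and $T\le\sqrt{\rD^{FO}_{\mathcal{E}}(\rho)/2}=:x$.

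\textbf{Step 2 (energy term).} By Hölder's inequality, $|\mathrm{Tr}(H(\rho-\sigma^*))|\le\|H\|_\infty\|\rho-\sigma^*\|_1\le\|H\|_\infty\sqrt{2\rD^{FO}_{\mathcal{E}}(\rho)}$. This is exactly the first term of the bound. One could sharpen the constant by shifting $H$, but that is not needed.

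\textbf{Step 3 (entropy term).} Apply the Fannes--Audenaert continuity bound: $|S(\rho)-S(\sigma^*)|\le T\ln(d-1)+h_2(T)$, valid for $T\le 1-1/d$. The right-hand side is nondecreasing in $T$ on $[0,1-1/d]$, so we may replace $T$ by its upper bound $x$, provided $x$ also lies in that interval. The hypothesis $\rD^{FO}_{\mathcal{E}}(\rho)\le 2\bigl(\tfrac{d-1}{d}\bigr)^2$ is equivalent to $x\le (d-1)/d$, which is precisely what this step requires. Monotonicity follows from the derivative $\ln(d-1)+\ln\frac{1-T}{T}\ge0$ for $T\le (d-1)/d$. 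Multiplying by $1/\beta$ gives the bracketed term.

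\textbf{Step 4 (combine).} Adding the two bounds gives the stated inequality $W_{\min}\le W_{\rD}^{\mathrm{UB}}$. The assumption $F(\rho)>F(\tau_\beta)$ only ensures the cost is nontrivial (positive); the argument does not otherwise use it.

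The main obstacle is Step 3. One must make sure the Audenaert form of the bound, with $\ln(d-1)$ rather than $\ln d$, is used in its correct domain. One must also justify the monotone substitution $T\to x$, which is exactly where the dimension-dependent hypothesis on $\rD^{FO}_{\mathcal{E}}(\rho)$ enters. A second, smaller check is that the minimizer $\sigma^*$ lies in $\mathcal{R}^{FO}(\mathcal{E})$, so it is admissible in the definition of $W_{\min}$.
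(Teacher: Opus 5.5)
Your proposal is correct and follows the same route as the paper's proof: bound $W_{\min}$ by $F(\rho)-F(\sigma^*)$ for the reachable minimizer, then control the energy gap by H\"older plus Pinsker, and the entropy gap by Audenaert's $\ln(d-1)$ continuity bound plus Pinsker and the monotonicity of $t\ln(d-1)+h_2(t)$ on $[0,1-1/d]$, which is exactly where the hypothesis $\rD^{FO}_{\mathcal{E}}(\rho)\le 2\left(\frac{d-1}{d}\right)^2$ enters. Your version is slightly cleaner than the paper's, because you use a single $\sigma^*$ for both terms, whereas the paper minimizes the energy term separately and adds an unused detour through $\|\ln\sigma\|_\infty$; note also that Audenaert's bound holds for all $T\in[0,1]$, so your restriction $T\le 1-1/d$ is harmless but not needed.
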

\noindent (Appendix~\ref{Appendix B} provides the proof of the above theorem.)

Scenario 2: Upper bound via thermodynamically anchored robustness (TAR).
To quantify and analyze the thermodynamic gap, we introduce the thermodynamically anchored robustness (TAR) $\rR_{\mathcal{E}}^{\text{th}}(\rho)$, defined as the minimal value $R\in[0,1]$ for which there exists a decomposition
\begin{equation}
    \sigma = (1-R)\rho + R\tau_\beta,
    \label{eq:thermal_decomp}
\end{equation}
with $\sigma \in \mathcal{R}^{FO}(\mathcal{E})$.
This represents the minimum contamination weight of the reservoir state $\tau_\beta$ that must be removed from a reachable state in order to recover the target state $\rho$. Physically, it quantifies the amount of purification required to remove thermal contamination from the supplied environment in order to synthesize $\rho$.

\begin{rema}[Observability and Physical Interpretation]
\label{rem:observability}
$R_{\mathcal{E}}^{\text{th}}(\rho)$ is directly observable in experiment through the following protocol:
\begin{enumerate}
    \item Perform quantum state tomography on $\rho$ to obtain its eigenvalues $\{p_i\}$ and eigenvectors.
    \item Measure the environment temperature $T$ (or estimate $\beta$) to construct $\tau_\beta = e^{-\beta H}/Z$.
    \item Solve the convex feasibility problem: minimize $R \geq 0$ subject to $\sigma(\rR) = (1-\rR)\rho + \rR\tau_\beta \in \mathcal{R}^{FO}(\mathcal{E})$.
\end{enumerate}
Unlike the unanchored robustness $\rR_{\mathcal{E}}^{FO}$, the thermal anchor eliminates the arbitrary auxiliary $\tau$, making the bound free of unphysical energy-scale assumptions. Furthermore, $R_{\mathcal{E}}^{\text{th}}$ reduces to the standard robustness in the limit $\beta \to 0$ (infinite temperature), where $\tau_\beta \to \mathbb{I}/d$ becomes the maximally mixed state.
\end{rema}

While both $W_{\min}$ and $\rR_{\mathcal{E}}^{\text{th}}$ are optimization problems over $\mathcal{R}^{FO}(\mathcal{E})$, the latter reduces to a tractable one-dimensional search. From \eqref{eq:thermal_decomp}, $\rR_{\mathcal{E}}^{\text{th}}$ is obtained via a scalar bisection over $R$ subject to the constraint $\sigma(\rR)\in\mathcal{R}^{FO}(\mathcal{E})$. In contrast, $W_{\min}$ requires a full optimization over the multidimensional set $\mathcal{R}^{FO}(\mathcal{E})$, which is generally computationally intractable.

To evaluate this upper bound, we introduce the thermal energy deviation $\Delta E_\beta(\rho) = \operatorname{Tr}[\rho H] - \operatorname{Tr}[\tau_\beta H]$. This serves as a temperature-dependent, experimentally observable measure of the energetic displacement between $\rho$ and $\tau_\beta$.

The aforementioned measure $\rR_{\mathcal{E}}^{\text{th}}(\rho)$, together with $\Delta E_\beta(\rho)$, yields a rigorous and experimentally accessible upper bound on $W_{\min}$, formalized as follows:

\begin{thrm}[Work--Robustness Upper Bound]
\label{thm: Robustness Upper Bound}
For any $d$-dimensional target state $\rho$ satisfying $F(\rho) > F(\tau_\beta)$ and $\rR_{\mathcal{E}}^{\mathrm{th}}(\rho)\in \Big[0, 1-\exp{\Big(-2\big(\frac{d-1}{d}\big)^2\Big)}\Big]$,
\begin{align}
&W_{\min}(\mathcal{E}\to\rho) \le \rR_{\mathcal{E}}^{\mathrm{th}}(\rho)\,\Delta E_\beta(\rho)\notag\\
&\quad + \frac{1}{\beta}\left[f\big(\rR_{\mathcal{E}}^{\mathrm{th}}(\rho)\big)\ln(d-1) + h_2\Big\{f\big(\rR_{\mathcal{E}}^{\mathrm{th}}(\rho)\big)\Big\}\right],
\end{align}
where $h_2(x)=-x\ln x-(1-x)\ln(1-x)$ and $f(x)= \sqrt{\frac{-\ln(1-x)}{2}}$.
Hereafter, we denote this upper bound by $W_{\rR,\text{th}}^{\text{UB}}$ for convenience.
\end{thrm}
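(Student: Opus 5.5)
Let $R:=\rR_{\mathcal{E}}^{\mathrm{th}}(\rho)$ and $\sigma^\ast=(1-R)\rho+R\tau_\beta$, which by definition of TAR (and by compactness of $\mathcal{R}^{FO}(\mathcal{E})$, so the minimum is attained) lies in $\mathcal{R}^{FO}(\mathcal{E})$. The plan is to bound $W_{\min}(\mathcal{E}\to\rho)$ by the free-energy gap from this single reachable state,
\begin{equation*}
W_{\min}(\mathcal{E}\to\rho)\le F(\rho)-F(\sigma^\ast)=\bigl(\operatorname{Tr}[\rho H]-\operatorname{Tr}[\sigma^\ast H]\bigr)+\frac{1}{\beta}\bigl(S(\sigma^\ast)-S(\rho)\bigr).
\end{equation*}
I will then treat the energy and entropy parts separately, in the same way as in the proof of Theorem~\ref{thm:Relative Entropy Upper Bound}.

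\textbf{Energy term.} By linearity, $\operatorname{Tr}[\rho H]-\operatorname{Tr}[\sigma^\ast H]=R\bigl(\operatorname{Tr}[\rho H]-\operatorname{Tr}[\tau_\beta H]\bigr)=R\,\Delta E_\beta(\rho)$. This is exactly the first term of the bound, with no inequality needed.

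\textbf{Entropy term.} I will use the Fannes--Audenaert inequality: $|S(\sigma^\ast)-S(\rho)|\le T\ln(d-1)+h_2(T)$, where $T=\tfrac12\|\sigma^\ast-\rho\|_1$. This holds provided $T\le (d-1)/d$. The task is then to bound $T$ by $f(R)$ and to use that $x\mapsto x\ln(d-1)+h_2(x)$ is nondecreasing on $[0,(d-1)/d]$.

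To bound $T$, I will go through relative entropy, mirroring Scenario 1, so that the Pinsker step gives exactly the $f$ in the statement. Writing $\sigma^\ast\ge(1-R)\rho$ gives $D_{\max}(\rho\|\sigma^\ast)\le-\ln(1-R)$. Hence $\rD(\rho\|\sigma^\ast)\le D_{\max}(\rho\|\sigma^\ast)\le-\ln(1-R)$. Pinsker's inequality $T\le\sqrt{\rD(\rho\|\sigma^\ast)/2}$ then yields $T\le\sqrt{-\ln(1-R)/2}=f(R)$. (The cruder direct bound $T=R\cdot\tfrac12\|\rho-\tau_\beta\|_1\le R$ also holds, but the stated form calls for $f$.)

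The hypothesis $R\le 1-\exp(-2((d-1)/d)^2)$ is precisely the condition $f(R)\le(d-1)/d$. Under it, both $T$ and $f(R)$ lie in the monotonicity region, and $T\ln(d-1)+h_2(T)\le f(R)\ln(d-1)+h_2(f(R))$. Combining this with the energy term gives the claimed bound on $W_{\min}$.

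The main obstacle is the monotonicity and domain bookkeeping for the Fannes--Audenaert function. One must check that $x\ln(d-1)+h_2(x)$ has derivative $\ln(d-1)+\ln\frac{1-x}{x}\ge0$ exactly when $x\le (d-1)/d$. This is what makes the stated range of $R$ both necessary and sufficient for the argument. The rest, including the $D_{\max}$ inequality from the operator inequality $\sigma^\ast\ge(1-R)\rho$, is routine. The assumption $F(\rho)>F(\tau_\beta)$ serves only to make the bound nontrivial, since it ensures $W_{\min}\ge0$ is the regime of interest; it is not used in the inequalities themselves.
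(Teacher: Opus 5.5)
Your proposal is correct and follows the paper's proof essentially step for step: the energy term $R\,\Delta E_\beta(\rho)$ comes from linearity, and the bound $\rD(\rho\|\sigma^\ast)\le-\ln(1-R)$ comes from $\sigma^\ast\ge(1-R)\rho$ (you obtain it via $D_{\max}$, the paper via operator monotonicity of the logarithm, and your route handles rank-deficient $\rho$ more cleanly), followed by Pinsker, Audenaert's inequality, and monotonicity of $x\ln(d-1)+h_2(x)$ on $[0,(d-1)/d]$. One correction to your parenthetical: since $2R^2\le-\ln(1-R)$ on $[0,1)$, i.e.\ $R\le f(R)$, the direct bound $T\le R$ is sharper than $T\le f(R)$, not cruder, and by itself already yields the stated inequality under the same hypothesis.
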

\noindent (Appendix~\ref{Appendix C} provides the proof of the above theorem.)

Therefore, we have that
\begin{equation}
W_{\min}(\mathcal{E}\to\rho)\leq \min\{W_\rD^{\text{UB}}, W_{\rR,\text{th}}^{\text{UB}}\}.
\end{equation}

\begin{rema}
Let us consider the two aforementioned scenarios under the assumption
that the respective informational and thermodynamic distances cannot be
sufficiently contracted via free operations. In such a case, if one
attempts to forcibly pursue the minimum work cost, the imposed work
process merely results in the dissipation of the system's free energy
into heat. This leads to a thermodynamically inefficient and
unphysical regime in which the distances themselves remain insufficiently
reduced while free energy is exclusively expended. Consequently, such a
regime is incompatible with the scenario of thermal robustness and the
TAR under consideration.

From this perspective, although the bounds imposed on $\rD^{FO}_{\mathcal{E}}$ and $\rR_{\mathcal{E}}^{\text{th}}$ in
the course of deriving the upper bound are merely mathematical
constraints within the derivation process, they can serve as
indispensable conditions to preclude precisely this unphysical regime.
\end{rema}

In this work, we have derived two distinct upper bounds on the minimal work cost, each quantifying the thermodynamic gap between the target state $\rho$ and the supplied environment $\mathcal{E}$ from different perspectives. Crucially, both bounds share a common structural feature: they explicitly decompose the minimum work cost into energetic and entropic components, reflecting the dual nature of thermodynamic state transformations.

Scenario~1: Energy--entropy decomposition in the information--statistical coordinate.
The relative entropy measure $\rD^{FO}_{\mathcal{E}}(\rho)$ in Scenario~1 serves as a reference metric on the state space, intrinsically linked to the admissibility constraints of free operations.
This measure captures the probability distributional variations between the target state and the reachable set within the permitted operational range. The derived bound decomposes this gap into an energy term (proportional to $\|H\|_\infty$) and an entropy term (proportional to $1/\beta$). Thus, it provides an information-theoretic approach that simultaneously quantifies the energy-pumping cost and the entropy-reduction cost required for state transformations via an information--statistical distance.

Scenario~2: Energy--entropy decomposition in the geometric--spectral coordinate.
The thermodynamically anchored robustness (TAR) measure $\rR_{\mathcal{E}}^{\text{th}}(\rho)$ in Scenario~2 follows the same principle of dual decomposition. However, it performs this decomposition by directly exploiting the spectral structure of the Hamiltonian and the thermal gradient.
Owing to the fundamental constraints of thermal operations (TO)---namely, the inability to reduce entropy or inject heat against the thermal gradient---synthesizing $\rho$ requires removing thermal contamination originating from the equilibrium state $\tau_\beta$. The TAR identifies the optimal preparable supply state that is geometrically and thermodynamically closest to $\rho$ along the straight line connecting $\rho$ and $\tau_\beta$. Consequently, the bound is decomposed into two terms: the first term (entropy term) represents the energy-pumping cost, and the second term (entropy term) represents the entropy-reduction cost. This yields an explicit energy--entropy decomposition in terms of experimentally accessible physical quantities.

Complementarity:
Both scenarios share the same energy--entropy decomposition structure, but they operate in \emph{different coordinates}.
Scenario~1 performs the decomposition using an information--statistical metric that reflects the structural constraints of free operations, thereby precisely tracking the topological and informational costs in state space. In contrast, Scenario~2 performs the decomposition through the system's energy levels and thermodynamic gradients, presenting the costs in a form that is intuitively accessible in actual experiments.
Therefore, rather than being in competition, the two measures analyze the same thermodynamic cost through different physical projections, establishing a complementary relationship. Their combination enables a multifaceted and rigorous understanding of the minimum cost required to refine the supplied environment to achieve target states beyond the natural reach of thermal operations.


\subsection{Minimum Work-Cost Upper Bounds for Bell-State Recovery in Thermodynamic Noisy Environments}

We consider the thermodynamic cost of generating a Bell state in a
laboratory where the free operations are \emph{Thermal operations (TO)}:
global unitaries $U$ satisfying $[U, H_S + H_B] = 0$, which preserve
energy and can access/reset to the thermal state $\tau_\beta$
\cite{Brandao2015}. While total energy is strictly conserved, TO always act to lower the energy of an initially hot (super-thermal) state by transferring excess heat to the bath, driving the system toward thermal reservoir $\tau_\beta$. This models the physical scenario where, during
entanglement generation, the system interacts with a thermal bath and
the experimenter can perform energy-conserving global operations.

We first set up the basic framework. The Hamiltonian is $H = \frac{\hbar\omega}{2}(\sigma_z^{(1)}+\sigma_z^{(2)})$.
Let $\beta = 1/(k_B T)$ be the inverse temperature with $T$ the absolute temperature, and define $x = \beta\hbar\omega$.
Let $Z = 4\cosh^2(x/2)$ be the partition function.
The two-qubit thermal reservoir state is
$\tau_\beta = \tau_\beta^{(1)}\otimes\tau_\beta^{(2)}$ with eigenvalues
in the energy basis
$\{\ket{00},\ket{01},\ket{10},\ket{11}\}$ given by
\begin{equation}
    \vec{\tau} = \frac{1}{Z}\bigl(e^{x},\, 1,\, 1,\, e^{-x}\bigr).
    \label{eq:tau_eig}
\end{equation}

The target resource state is the Bell state
$\rho = \ket{\Phi^+}\bra{\Phi^+}$. In the energy eigenbasis its
eigenvalues are
\begin{equation}
    \vec{\rho} = \bigl(\tfrac{1}{2},\, 0,\, 0,\, \tfrac{1}{2}\bigr).
    \label{eq:rho_eig}
\end{equation}

The entanglement generation protocol inevitably introduces depolarizing
noise from gate errors and decoherence. The actual noisy output is
\begin{equation}
    \sigma_{\epsilon} = (1-\epsilon)\rho + \frac{\epsilon}{4}\mathbb{I}_4,
    \label{eq:sigma_eps}
\end{equation}
with eigenvalues
\begin{equation}
    \vec{\sigma}_\epsilon = \bigl(\tfrac{1}{2}-\tfrac{\epsilon}{4},\,
    \tfrac{\epsilon}{4},\, \tfrac{\epsilon}{4},\, \tfrac{1}{2}-\tfrac{\epsilon}{4}\bigr).
    \label{eq:sigma_eig}
\end{equation}

At this point, the reachable set under thermal operations is given by: The set of physically available free states is
$\mathcal{E} = \{\tau_\beta, \sigma_{\epsilon}\}$. Under thermal operations,
a state $\sigma$ is reachable from $\mathcal{E}$ if it is thermo-majorized
by at least one element of $\mathcal{E}$ \cite{Brandao2015}. The reachable
set is
\begin{equation}
    \mathcal{R}^{TO}(\mathcal{E}) =
    \bigl\{\sigma \;\big|\; \sigma \prec_{\text{thermo}} \omega \quad \text{for some } \omega \in \text{conv}(\mathcal{E})\bigr\},
    \label{eq:RTO_def}
\end{equation}
where $\sigma \prec_{\text{thermo}} \omega$ denotes thermo-majorization:
$\sigma$ is more thermodynamically free than $\omega$.

For variations in the noise $\epsilon$, the following two measures with respect to the supplied environment $\mathcal{E}$ are computed: (For the detailed derivations and the validity analysis, see (1) and (2) of Appendix~\ref{Appendix D})
\begin{equation}\label{eq:D_TO_final}
    \rD^{TO}_{\mathcal{E}}(\rho) = \rD(\rho\|\sigma_\epsilon) = \ln\frac{4}{4-3\epsilon},
\end{equation}
and
let $\lambda_i(\rR, x)$ be the eigenvalues of $\sigma(\rR) = (1-\rR)\rho + \rR\tau_\beta$, and define the entropy mapping as
\begin{equation}
    f_x(\rR) \equiv -\sum_{i=1}^4 \lambda_i(\rR, x) \ln \lambda_i(\rR, x),
    \label{eq:governing_entropy}
\end{equation}
then
\begin{equation}
    \rR_{\mathcal{E}}^{\text{th}}(\rho) \equiv \min\left\{1, f_x^{-1}(S_0(\epsilon))\right\},
\end{equation}
where $S_0(\epsilon) = -\left(1 - \frac{3\epsilon}{4}\right)\ln\left(1 - \frac{3\epsilon}{4}\right) - \frac{3\epsilon}{4}\ln\left(\frac{\epsilon}{4}\right)$.

\begin{rema}[Non-Elementarity of the Closed-Form Inversion]
The governing equation \eqref{eq:governing_entropy} involves a linear combination of four non-linear transcendental terms of the form $\lambda_i(\rR, x) \ln \lambda_i(\rR, x)$, where each eigenvalue $\lambda_i(\rR, x)$ contains square-root algebraic dependencies on $\rR$. By transcendental number theory and Galois-theoretic extensions for transcendental equations (analogous to the Abel--Ruffini theorem for algebraic degree $n \ge 5$), isolated algebraic isolation of $\rR$ from a sum of non-commensurable logarithmic branches $\ln \lambda_i(\rR, x)$ is structurally impossible.

Consequently, $\rR_{\mathcal{E}}^{\text{th,TO}}(\rho)$ belongs strictly to the class of implicit transcendental functions, and any numerical evaluation must rely on root-finding procedures rather than a closed-form expression.
\end{rema}

\begin{figure*}[htbp]
    \centering
   \includegraphics[width=.95\linewidth]{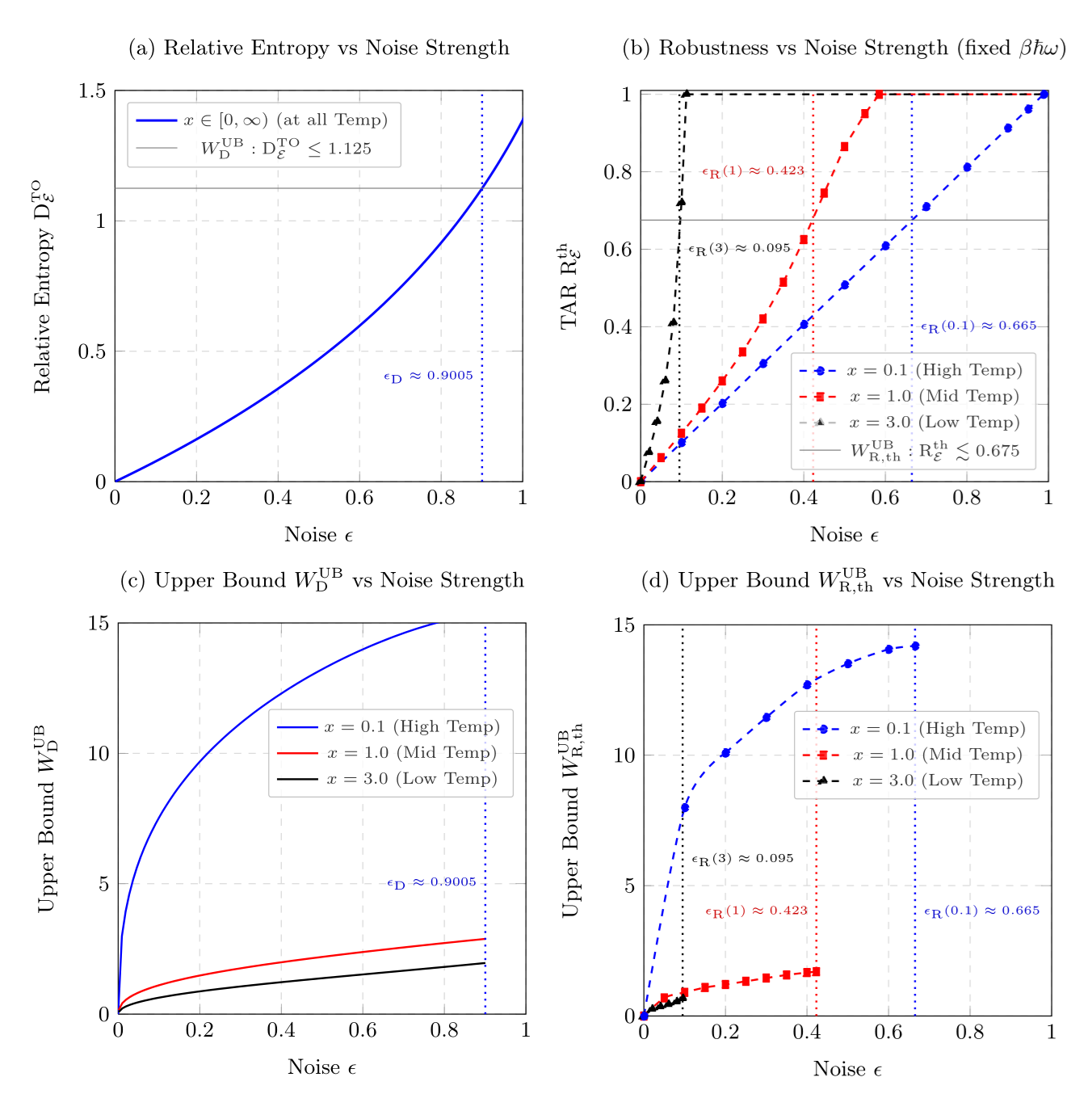}
    \caption{Comparative Analysis of Minimum Work Cost Upper Bounds Derived from Information-Theoretic Distance and Thermodynamically Anchored Robustness. (a) The relative entropy $\text{D}_{\mathcal{E}}^{\text{TO}}$ as an information-theoretic statistical distance. This distance increases monotonically with noise and approaches the theoretical bound of $\ln 4 \approx 1.386$ as $\epsilon \to 1$. $\epsilon_D$ bounds the upper bound $W_{\text{D}}^{\text{UB}}$; i.e., $W_{\text{D}}^{\text{UB}}$ is valid only when $\epsilon \le \epsilon_D$.
(b) The TAR $\text{R}_{\mathcal{E}}^{\text{th}}$ represents the actual thermodynamic adaptation limit of the system against noise. At lower
temperatures ($x=3.0$), the adaptation rate saturates ($R=1$) rapidly. For each inverse temperature~$x$, $\epsilon_R(x)$ bounds the upper bound $W_{\text{R},\text{th}}^{\text{UB}}$; i.e., $W_{\text{R},\text{th}}^{\text{UB}}$ is valid only when $\epsilon \le \epsilon_R(x)$.
(c) The work-cost upper bound $W_{\text{D}}^{\text{UB}}$ derived from the relative entropy. Representing the cost from a purely informational
viewpoint, this bound remains strictly finite over the restricted noise range $\epsilon \in [0, 0.9005]$.
(d) The work-cost upper bound $W_{\text{R},\text{th}}^{\text{UB}}$ derived from the TAR remains finite at the derived bound
$\epsilon_{\text{R}}$ identified in (b), and its valid noise range shrinks markedly as the temperature decreases. In the intermediate-
and low-temperature and low-noise regime ($\epsilon \to 0$), it yields a lower upper bound than the relative-entropy bound in (c), and thus provides a tighter constraint on the minimum work cost. This means that, in intermediate- and low-temperature, low-noise environments, this upper bound---which directly reflects the system's thermal accessibility---imposes a tighter bound on the actual thermodynamic work cost than the one reflecting the information-theoretic distance.
(The implicit transcendental nature of $\text{R}_{\mathcal{E}}^{\text{th}}(\rho)$ precludes a closed-form expression; accordingly, all numerical results presented herein were obtained via an AI-assisted root-finding procedure, ensuring sub-machine-precision convergence across the full $\epsilon$ domain.)}    \label{fig:bell_robustness_comparison}
\end{figure*}

These two explicit results reveal a sharp contrast in how the energetic--entropic decomposition manifests in different coordinates.
The first measure, $\rD^{TO}_{\mathcal{E}}(\rho)$, depends \emph{solely} on the noise parameter $\epsilon$ and is strictly independent of the environmental temperature (encoded in $x$).
This temperature invariance is a direct consequence of the specific setup considered in this work. For the target state $\rho$, the $\epsilon$-noisy state $\sigma_\epsilon$ itself is the state closest to $\rho$ in terms of informational--statistical distance. That is, since the optimal state is fixed and exhibits no temperature dependence, the derived upper bound becomes temperature-invariant as well. (We note that TO itself constitutes a temperature-dependent free-operation set governed by $\tau_\beta$; thus, for general target states, the relative-entropy measure would indeed inherit temperature dependence.)

In contrast, the second measure $\rR_{\mathcal{E}}^{\text{th}}(\rho)$ exhibits a nontrivial two-dimensional dependence on $\epsilon$ and $x$, and its global solution is determined via the inverse map $f_x^{-1}$ as $\rR_{\mathcal{E}}^{\text{th}}(\rho)=f_x^{-1}(S_0(\epsilon))$. In the infinite-temperature limit, we have $\lim_{x\to0^+}\rR_{\mathcal{E}}^{\text{th}}(\rho)=\epsilon$: under the maximally mixed reservoir state, the geometric--spectral coordinate reduces to the information--statistical one, and the noise parameter $\epsilon$ is exactly recovered. Moreover, since
$\partial\rR_{\mathcal{E}}^{\text{th}}/\partial x>0$, increasing $x$ (cooling) steepens the thermal gradient and shrinks the thermodynamically reachable set $\mathcal{R}^{FO}(\mathcal{E})$, which in turn requires a larger contamination weight $R$ to restore $\rho$ and yields a strict monotonic increase of the robustness with decreasing temperature. This temperature sensitivity is captured by the geometric--spectral decomposition in Scenario~2.

In summary, while Scenario~1 provides a temperature-invariant information-theoretic baseline specific to this setup (where the noisy state serves as the optimal), Scenario~2 reveals the explicit thermal-geometric response of the work cost under general temperature variations. The convergence at $x \to 0^+$ and the strict monotonicity for $x>0$ together demonstrate that the two measures are not contradictory but rather complementary projections of the same thermodynamic cost.

We also note that the average energies of the thermal reservoir and the target Bell state are respectively
\begin{align}
\langle H \rangle_{\tau_\beta} = -\hbar\omega \tanh\!\left(\frac{x}{2}\right),  \quad \langle H \rangle_{\rho} = 0,
\end{align}
so that the thermal energy deviation is
\begin{equation}
\Delta E_\beta(\rho) = \langle H \rangle_{\rho} - \langle H \rangle_{\tau_\beta} = \hbar\omega \tanh\!\left(\frac{x}{2}\right) > 0,
\end{equation}
and $\|H\|_\infty=\hbar\omega$.

Under thermal operations (TO), the set of noisy states that are thermally reachable from the noisy Bell-type state $\sigma_\epsilon$
is uniquely determined. The minimum work cost required to prepare the target state $\rho$ from this set is then bounded from above by
Theorems~\ref{thm:Relative Entropy Upper Bound} and \ref{thm: Robustness Upper Bound}: For $\rD \in [0.1.125]$
\begin{align}
&W_{\min}(\mathcal{E}\to\rho)\notag \\
&\leq \hbar\omega\left\{\sqrt{2\rD} +  \frac{1}{x}\left[\sqrt{\frac{\rD}{2}}\;\ln(3)\;+\;h_2\!\left(\sqrt{\frac{\rD}{2}}\right)\right]\right\},
\end{align}
and, for $\rR\in [0, 1-e^{-9/8}] (\approx [0,0.675])$
\begin{align}
W_{\min}(\mathcal{E}\to\rho) &\leq \hbar\omega\left[\rR \tanh\!\left(\frac{x}{2}\right) + \frac{1}{x}\sqrt{\frac{-\ln(1-\rR)}{2}}\;\ln(3)\right.\notag \\
 &\qquad \quad \left.+ \frac{1}{x}\,h_2\!\left(\sqrt{\frac{-\ln(1-\rR)}{2}}\right)\right],
\end{align}
where $\rD= \rD^{TO}_{\mathcal{E}}(\rho)$ and $\rR = \rR_{\mathcal{E}}^{\text{th}}(\rho)$. The bound splits the cost into two physically transparent parts: (i) the energy terms $\hbar\omega\sqrt{2\rD}$ and $\hbar\omega \rR \tanh\!\left(\frac{x}{2}\right)$ accounting for the reversible work needed to bridge the thermal energy gap, attenuated by the robustness factor; and (ii) the remaining entropy terms quantifying the irreversible work cost of compensating preparation noise and creating entanglement. The measures $\rD$ and $\rR$ thus govern both the energy penalty and the entropy penalty in a single parameter.

Figure~\ref{fig:bell_robustness_comparison} presents the numerical distributions of $\rD$ and $\rR$, together with the numerical distributions, derivation limitations, and an analysis thereof for the upper bounds $W_{\text{R},\text{th}}^{\text{UB}}$ and $W_{\text{D}}^{\text{UB}}$ induced from these two measures.

\section{Conclusion}

In this work, we have proposed a novel framework for the relative resource theory with respect to a supplying reference set
$\mathcal{E}$.  Departing from conventional quantum resource theories that quantify utility relative to an idealized free set $\mathcal{F}$,
we consider a specific supply environment $\mathcal{E}$ that encompasses all states realizable under current experimental and
noise constraints.  This shifts the benchmark from $\mathcal{F}$ to practically accessible resource states, thereby providing a new
criterion for fair comparisons across heterogeneous quantum supply environments.  Specifically, we have focused on two canonical
representations of $\mathcal{E}$: the set $\mathcal{R}^{FO}(\mathcal{E})$ of states reachable via free operations, which defines the attainable boundary, and the set $\mathcal{R}^{\max}(\mathcal{E})$ defined through numerical optimization, which serves as the criterion for operational equivalence.

To evaluate the relative utility of an arbitrary state with respect to the supply environment $\mathcal{E}$, we have established two core
evaluation methodologies.  The first is a strategy-dependent benchmark that quantifies the achievable performance under an
individually optimized strategy, and the second is a strategy-independent lower bound on relative superiority.  To
translate these methodologies into concrete numerical quantities, we have introduced the relative measures $\rR_{\mathcal{E}}$ and
$\rD_{\mathcal{E}}$, based on the generalized robustness measure $\rR$ and the relative-entropy measure $\rD$.  Building on these
measures, we have theoretically established a quantitative hierarchical analysis under the two aforementioned strategies.  The
derived measures naturally inherit the essential properties of faithfulness, monotonicity, and convexity, thereby satisfying the
necessary axioms that a valid quantifier of relative superiority must obey.

Furthermore, we have applied this framework to analyze the minimum work cost required to synthesize a target state in a thermodynamic
environment.  While the minimum work cost $W_{\min}$ is given by the free-energy difference between the states, this is merely a lower
bound based on absolute utility that does not take the distance between states into account.  In reality, the actual work cost is
determined not only by the free-energy difference, but also comprehensively by multiple factors, including the analytical
distance between states, the experimental environment, and the available apparatus.  By applying the two methodologies of relative
evaluation, we have shown that this minimum work cost exactly corresponds to a strategy-independent quantification.  To go beyond
this absolute lower bound, we have quantified, via $\rD^{FO}_{\mathcal{E}}$ and $\rR^{\text{th}}_{\mathcal{E}}$, the individually optimized cost of synthesizing the target state from the attainable optimal states, under two scenarios based respectively on the relative entropy and the Thermodynamically Anchored Robustness (TAR), thereby deriving two upper bounds on the minimum work cost.

In addition, by analyzing the work cost of restoring a Bell state in a noisy two-qubit environment, we have obtained the following
macroscopic results concerning these two upper bounds.  First, the relative-entropy-based upper bound, which reflects the
information-theoretic distance, relaxes monotonically as noise increases and ceases to be valid in the high-noise regime.  Second,
the TAR-based upper bound reflects the thermodynamic adaptation limit of the system, and its valid range shrinks as the temperature
decreases.  Third, and most importantly, in the moderate- and low-temperature and low-noise regime, the TAR-based upper bound
yields a lower value than the relative-entropy-based upper bound. This means that, in evaluating the actual thermodynamic work cost, a
distance measure that directly reflects the thermal accessibility of the system imposes a tighter and more practically relevant
constraint than a purely information-theoretic distance.  These quantitative findings serve as key evidence that our relative
resource theory correctly captures the actual thermodynamic constraints in the problem of entanglement recovery in noisy
environments.

In summary, by shifting the reference from the idealized free set to the practically accessible supply environment $\mathcal{E}$, our
relative resource theory provides a rigorous and operationally relevant new standard for quantifying and comparing quantum
resources.  In the NISQ era, the proposed measures and hierarchical structure open a new avenue for resource management in realistic
quantum protocols, including state synthesis, entanglement recovery, and sustainable resource cycles.

During manuscript preparation, the authors used AI tools (including DeepSeek) to assist with language editing, proofreading, and
restructuring of numerical and mathematical derivations.  All AI-assisted material was independently verified and thoroughly
revised by the authors, who bear full responsibility for the accuracy, completeness, and originality of this work.  No AI tool
was used for literature search, data analysis, or scientific conclusion-drawing.

{\it Acknowledgments.--}This work is supported by the Fundamental Research Funds for the Central Universities (Grants No. 3072025YC2404), National Natural Science Foundation of China (Grants No. 12201555), Natural Science Foundation of Hunan province (Grants No. 2025JJ50050) and Hunan Basic Science Research Center for Mathematical Analysis (2024JC2002).

\begin{widetext}

\appendix

\section{Proofs of Theorems~\ref{thm1} and ~\ref{thm2}}
\label{Appendix A}

\noindent\textbf{Proof of Theorem~\ref{thm1}:}
To prove that $\rR^{\max}_{r,\mathcal{E}}(\rho) = \frac{\rR_{r}(\rho)-\rR_{r}(\mathcal{E})}{1+\rR_{r}(\mathcal{E})}$ if $\rho \notin \mathcal{R}^{\max}_r(\mathcal{E})$, we first prove $\rR^{\max}_{r,\mathcal{E}}(\rho)\leq\frac{\rR_{r}(\rho)-\rR_{r}(\mathcal{E})}{1+\rR_{r}(\mathcal{E})}$. Then,
there is a state $\tau'$ satisfying $\frac{\rho+\rR_{r}(\rho)\tau'}{1+\rR_{r}(\rho)}\in\cF(\cH)$, and then, from the definition of $\rR_{r}$ and
\be\label{9}
\frac{\rho+\rR_{r}(\rho)\tau'}{1+\rR_{r}(\rho)} = \frac{\frac{\rho+(\frac{\rR_{r}(\rho)-\rR_{r}(\mathcal{E})}{1+\rR_{r}(\mathcal{E})})\tau'}{1+(\frac{\rR_{r}(\rho)-\rR_{r}(\mathcal{E})}{1+\rR_{r}(\mathcal{E})})}+\rR_{r}(\mathcal{E})\tau'}{1+\rR_{r}(\mathcal{E})},
\ee
we get that $\rR_{r}(\rho') \leq \rR_{r}(\mathcal{E})$ where $\rho' = \frac{\rho+(\frac{\rR_{r}(\rho)-\rR_{r}(\mathcal{E})}{1+\rR_{r}(\mathcal{E})})\tau'}{1+(\frac{\rR_{r}(\rho)-\rR_{r}(\mathcal{E})}{1+\rR_{r}(\mathcal{E})})}$. This means that $\rho' \in\mathcal{R}^{\max}_r(\mathcal{E})$ from the definition of $\mathcal{R}^{\max}_r(\mathcal{E})$.
It also implies that
\be\label{10}
\rR^{\max}_{r,\mathcal{E}}(\rho)\leq\frac{\rR_{r}(\rho)-\rR_{r}(\mathcal{E})}{1+\rR_{r}(\mathcal{E})}
\ee
by the minimum definition of $\rR^{\max}_{r,\mathcal{E}}$.

Conversely, to prove $\rR^{\max}_{r,\mathcal{E}}(\rho)\geq\frac{\rR_{r}(\rho)-\rR_{r}(\mathcal{E})}{1+\rR_{r}(\mathcal{E})}$, let $\tau''$ be the state that satisfies $\frac{\rho+\rR^{\max}_{r,\mathcal{E}}(\rho)\tau''}{1+\rR^{\max}_{r,\mathcal{E}}(\rho)}\in\mathcal{R}^{\max}_r(\mathcal{E})$. This implies that $\rR_{r}(\rho'')\leq\rR_{r}(\mathcal{E})$ where $\rho'' = \frac{\rho+\rR^{\max}_{r,\mathcal{E}}(\rho)\tau''}{1+\rR^{\max}_{r,\mathcal{E}}(\rho)}$, and also means that there is $\tau'''$ satisfying
\be\label{11}
\frac{\rho''+\rR_{r}(\rho'')\tau'''}{1+\rR_{r}(\rho'')}\in \cF(\cH).
\ee
And this is equivalent to
\be\label{12}
\frac{\rho+\Big[\rR^{\max}_{r,\mathcal{E}}(\rho)\tau''+\big\{1+\rR^{\max}_{r,\mathcal{E}}(\rho)\big\}\rR_{r}(\rho'')\tau'''\Big]}{1+\Big[\rR^{\max}_{r,\mathcal{E}}(\rho)+\big\{1+\rR^{\max}_{r,\mathcal{E}}(\rho)\big\}\rR_{r}(\rho'')\Big]}\in \cF(\cH).
\ee
and from the definition of $\rR_{r}$ we get that
\bea
\rR^{\max}_{r,\mathcal{E}}(\rho)+\big\{1+\rR^{\max}_{r,\mathcal{E}}(\rho)\big\}\rR_{r}(\rho'')\geq \rR_{r}(\rho).
\eea
Also, due to the inequality $\rR_{r}(\rho'')\leq\rR_{r}(\mathcal{E})$, this leads to
\be\label{13}
\rR^{\max}_{r,\mathcal{E}}(\rho)\geq\frac{\rR_{r}(\rho)-\rR_{r}(\rho'')}{1+\rR_{r}(\rho'')}\geq\frac{\rR_{r}(\rho)-\rR_{r}(\mathcal{E})}{1+\rR_{r}(\mathcal{E})}.
\ee

In addition, if $\rR_{r}(\rho) \leq \rR_{r}(\mathcal{E})$, since $\rho\in \mathcal{R}^{\max}_r(\mathcal{E})$, $\rR_{r,\rR_{r}(\mathcal{E})}(\rho)= 0$ is directly obtained.
Therefore, for any state $\rho\in\cD(\cH)$, Eq. (\ref{r-max}) is established.

\medskip

\noindent \textbf{Proof of Theorem~\ref{thm2}:}
For the case $\rD(\rho)\le\rD(\mathcal{E})$, by definition, we have $\rho\in\mathcal{R}^{\max}_d(\mathcal{E})$ and therefore
$\rD^{\max}_{\mathcal{E}}(\rho)=0$.

For the case $D(\rho)> D(\mathcal{E})$, we prove that
$\rD^{\max}_{\mathcal{E}}(\rho)=\rD(\rho)-\rD(\mathcal{E})$.

(A) Lower bound:
For any $\sigma\in\mathcal{R}^{\max}_d(\mathcal{E})$ we have $\rD(\sigma)\le\rD(\mathcal{E})$ by definition.  Let $\tau\in\mathcal{F}(\mathcal{H})$ be an arbitrary free state.  By the triangle inequality (chain rule) for relative entropy,
\begin{equation}
\rD(\rho\|\sigma)+\rD(\sigma\|\tau)\;\ge\;\rD(\rho\|\tau).
\end{equation}
Since $\rD(\sigma\|\tau)=\rD(\sigma)$ when $\tau$ is taken as the reference free state, we obtain
\begin{equation}
\rD(\rho\|\sigma)
\;\ge\;\rD(\rho\|\tau)-\rD(\sigma\|\tau)
\;=\;\rD(\rho\|\tau)-\rD(\sigma).
\end{equation}
From the respective definitions, we have $\rD(\rho\|\tau)\geq =\rD(\rho)$
and $\rD(\mathcal{E}) \geq \rD(\sigma)$, hence
\begin{equation}
\rD(\rho\|\sigma)
\;\ge\;\rD(\rho)-\rD(\mathcal{E}).
\end{equation}
Since this holds for every $\sigma\in\mathcal{R}^{\max}_d(\mathcal{E})$,
\begin{equation}
\min_{\sigma\in\mathcal{R}^{\max}_d(\mathcal{E})}\rD(\rho\|\sigma)
\;\ge\;\rD(\rho)-\rD(\mathcal{E}).
\label{eq:lower}
\end{equation}

(B) Upper bound:
Assume $\rD(\rho)>\rD(\mathcal{E})$, and fix an optimal free
state $\tau_*$ satisfying
$\rD(\rho\|\tau_*) = \min_{\tau\in\mathcal{F}(\mathcal{H})}\rD(\rho\|\tau)=\rD(\rho)$.  We then consider the affine
interpolation
$\sigma_\lambda=\lambda\rho+(1-\lambda)\tau_*$ for
$\lambda\in[0,1]$, and define
$f(\lambda):=\rD(\sigma_\lambda\|\tau_*)
=\rD(\lambda\rho+(1-\lambda)\tau_*\|\tau_*)$.  Because relative
entropy is continuous (in fact, convex and smooth) in its first
argument on a common support, the map
$f:[0,1]\to\mathbb{R}$ is continuous; we have
$f(0)=\rD(\tau_*\|\tau_*)=0\le\rD(\mathcal{E})$ and
$f(1)=\rD(\rho\|\tau_*)=\rD(\rho)\ge\rD(\mathcal{E})$.  Hence, by
the intermediate-value theorem, there exists
$\lambda^*\in(0,1]$ such that $f(\lambda^*)=\rD(\mathcal{E})$.
Setting $\sigma^*=\sigma_{\lambda^*}$, we obtain
$\rD(\sigma^*)\leq\rD(\mathcal{E})$, so that
$\sigma^*\in\mathcal{R}^{\max}_d(\mathcal{E})$.  Along this
one-dimensional (collinear) interpolation path, the relative entropy
is linear in the parameter; equivalently, the triangle inequality
holds with equality for the triple $(\rho,\sigma^*,\tau_*)$.
Therefore
\begin{equation}
\rD(\rho\|\tau_*)
=\rD(\rho\|\sigma^*)+\rD(\sigma^*\|\tau_*),
\end{equation}
which yields
\begin{equation}
\rD(\rho\|\sigma^*)
=\rD(\rho\|\tau_*)-\rD(\sigma^*\|\tau_*)
=\rD(\rho)-\rD(\mathcal{E}).
\end{equation}
Thus the candidate $\sigma^*$ attains the value $\rD(\rho)-\rD(\mathcal{E})$, and consequently
\begin{equation}
\min_{\sigma\in\mathcal{R}^{\max}_d(\mathcal{E})}\rD(\rho\|\sigma)
\;\le\;\rD(\rho)-\rD(\mathcal{E}).
\label{eq:upper}
\end{equation}

Combining the lower bound~\eqref{eq:lower} and the upper bound~\eqref{eq:upper} gives
\begin{equation}
\rD^{\max}_{\mathcal{E}}(\rho) =\rD(\rho)-\rD(\mathcal{E}),
\end{equation}
where the last equality uses $\rD(\rho)>\rD(\mathcal{E})$. This completes the proof.

\section{Proof of Theorem \ref{thm:Relative Entropy Upper Bound}}
\label{Appendix B}

From the definition of the free energy, $F(\sigma) = \langle H \rangle_\sigma - \beta^{-1} S(\sigma)$, the definition of $W_{\min}$, and the assumptions $F(\rho) > F(\tau_\beta)$, it follows that for any reachable state $\sigma \in \mathcal{R}^{FO}(\mathcal{E})$,
\begin{equation}
W_{\min}(\mathcal{E}\to\rho) \leq F(\rho) - F(\sigma) = \langle H \rangle_\rho - \langle H \rangle_\sigma - \frac{1}{\beta}\bigl(S(\rho) - S(\sigma)\bigr).
\label{eq:pf_1}
\end{equation}

We first bound the energy expectation difference. Using H\"older's inequality for the trace norm $\|\cdot\|_1$, we have $\langle H \rangle_\rho - \langle H \rangle_\sigma = \mathrm{Tr}[(\rho-\sigma)H] \leq \|H\|_\infty \|\rho-\sigma\|_1$. By Pinsker's inequality, the trace distance is bounded by the relative entropy as $\|\rho-\sigma\|_1 \leq \sqrt{2\rD(\rho\|\sigma)}$. Consequently, taking the minimum over the reachable set yields
\begin{equation}
\langle H \rangle_\rho - \langle H \rangle_\sigma \leq \|H\|_\infty \sqrt{2D^{FO}_{\mathcal{E}}(\rho)}.
\label{eq:pf_2}
\end{equation}

Next, we bound the entropy difference.
We begin with Audenaert's entropy-difference inequality \cite{Audenaert2007}, which states that for arbitrary $d$-dimensional quantum states $\rho$ and $\sigma$,
\begin{equation}
S(\sigma)-S(\rho)\;\le\; \frac{\|\rho-\sigma\|_1}{2}\ln(d-1)+h_2\!\left(\frac{\|\rho-\sigma\|_1}{2}\right),
\label{eq:audenaert}
\end{equation}
where $h_2(x)=-x\ln x-(1-x)\ln(1-x)$ is the binary entropy.
Crucially, the right-hand side of \eqref{eq:audenaert} is a monotonically increasing function of the trace distance $\delta = \tfrac{1}{2}\|\rho-\sigma\|_1$ on the interval $\delta \in [0,1]$.

Therefore, by Pinsker's inequality,
\begin{equation}
\delta = \frac{\|\rho-\sigma\|_1}{2} \leq \sqrt{\frac{\rD(\rho\|\sigma)}{2}},
\label{eq:pinsker_delta}
\end{equation}
and since the function $g(x)=x\ln(d-1)+h_2(x)$ is monotonically increasing on $\bigl[0,\,1-\tfrac{1}{d}\bigr]$, the inequality direction is preserved, and we can directly substitute \eqref{eq:pinsker_delta} into \eqref{eq:audenaert}.
This yields the desired bound solely in terms of the relative entropy: for $\rho$ satisfying $\rD(\rho\|\sigma)\in [0, 2(\frac{d-1}{d})^2]$,
\begin{equation}
S(\sigma^*)-S(\rho)\;\le\;
\sqrt{\frac{\rD(\rho\|\sigma)}{2}}\;\ln(d-1)\;+\;h_2\!\left(\sqrt{\frac{\rD(\rho\|\sigma)}{2}}\right)
\end{equation}

Since the relative entropy is non-negative, $D(\rho\|\sigma) = S(\sigma) - S(\rho) - \mathrm{Tr}[(\rho-\sigma)\ln\sigma] \geq 0$, the entropy terms satisfy
\begin{equation}
S(\sigma) - S(\rho) \leq \rD(\rho\|\sigma) + \mathrm{Tr}[(\rho-\sigma)\ln\sigma] \leq \rD(\rho\|\sigma) + 2\|\ln\sigma\|_\infty.
\end{equation}
Taking the minimum over $\sigma \in \mathcal{R}^{FO}(\mathcal{E})$, and noting that the defining relative entropy $\rD^{FO}_{\mathcal{E}}(\rho) = \min_{\sigma \in \mathcal{R}^{FO}(\mathcal{E})} \rD(\rho\|\sigma)$ serves as the leading-order thermodynamic measure of the gap, the inequality \eqref{eq:pf_1} reduces to a unified upper bound solely in terms of $\rD^{FO}_{\mathcal{E}}(\rho)$: for $\rho$ satisfying $\rD^{FO}_{\mathcal{E}}(\rho)\in [0, 2(\frac{d-1}{d})^2]$,
\begin{equation}
F(\rho) - F(\sigma) \leq \|H\|_\infty \sqrt{2\rD^{FO}_{\mathcal{E}}(\rho)} + \frac{1}{\beta}\left[\sqrt{\frac{\rD^{FO}_{\mathcal{E}}(\rho)}{2}}\;\ln(d-1)\;+\;h_2\!\left(\sqrt{\frac{\rD^{FO}_{\mathcal{E}}(\rho)}{2}}\right)\right].
\label{eq:pf_3}
\end{equation}

Finally, for $\rho$ satisfying $\rD^{FO}_{\mathcal{E}}(\rho)\in [0, 2(\frac{d-1}{d})^2]$, substituting \eqref{eq:pf_3} into $W_{\min}$ yields
\begin{equation}
W_{\min}(\mathcal{E}\to\rho) \leq \|H\|_\infty \sqrt{2D^{FO}_{\mathcal{E}}(\rho)} + \frac{1}{\beta}\left[\sqrt{\frac{D^{FO}_{\mathcal{E}}(\rho)}{2}}\;\ln(d-1)\;+\;h_2\!\left(\sqrt{\frac{D^{FO}_{\mathcal{E}}(\rho)}{2}}\right)\right],
\end{equation}
which completes the proof.

\section{Proof of Theorem \ref{thm: Robustness Upper Bound}}
\label{Appendix C}

Let $R = \rR_{\mathcal{E}}^{\text{th}}(\rho)$ with corresponding $\sigma^* \in \mathcal{R}^{FO}(\mathcal{E})$ achieving the minimal decomposition in Eq.~\eqref{eq:thermal_decomp}. Rearranging gives
\begin{equation}
    \rho \;=\; \frac{1}{1-R}\sigma^* - \frac{R}{1-R}\tau_\beta.
    \label{eq:rho_from_sigma}
\end{equation}

(i) Operator inequality and relative entropy:
Since $\tau_\beta \geq 0$ (full rank for finite temperature), Eq.~\eqref{eq:thermal_decomp} implies $\sigma^* \geq (1-R)\rho$ as positive semidefinite operators. The logarithm is operator monotone, so on $\mathrm{supp}(\rho)$,
\begin{align}
    0 \leq D(\rho \| \sigma^*) &= \mathrm{Tr}\bigl[\rho(\ln\rho - \ln\sigma^*)\bigr] \notag \\
    &\leq \mathrm{Tr}\bigl[\rho\bigl(\ln\rho - \ln\bigl((1-R)\rho\bigr)\bigr)\bigr] \notag \\
    &= \mathrm{Tr}\bigl[\rho\bigl(-\ln(1-R)\mathbb{I}\bigr)\bigr] \notag \\
    &= -\ln(1-R),
    \label{eq:rel_ent_bound}
\end{align}
where we used $\ln\sigma^* \geq \ln\bigl((1-R)\rho\bigr) = \ln\rho + \ln(1-R)\mathbb{I}$, which reverses the inequality inside the trace because $-\ln(1-R) > 0$ for $R \in [0,1)$. As mentioned above,
\begin{equation}
S(\sigma^*)-S(\rho)\;\le\;
\sqrt{\frac{D(\rho\|\sigma^*)}{2}}\;\ln(d-1)\;+\;h_2\!\left(\sqrt{\frac{D(\rho\|\sigma^*)}{2}}\right),
\end{equation}
and since the right-hand side is monotonically increasing on $[0,1-1/d]$, we obtain
\begin{equation}
    S(\sigma^*) - S(\rho) \;\leq\; \sqrt{\frac{-\ln(1-R)}{2}}\,\ln(d-1)
+h_2\!\left(\sqrt{\frac{-\ln(1-R)}{2}}\right), \qquad R\in \left[0, 1-\exp{\Big(-2\big(\frac{d-1}{d}\big)^2\Big)}\right].
    \label{eq:entropy_bound}
\end{equation}

(ii) Energy difference:
From Eq.~\eqref{eq:thermal_decomp},
\begin{equation}
    \langle E \rangle_{\sigma^*} \;=\; (1-R)\langle E \rangle_\rho + R\langle E \rangle_{\tau_\beta}.
    \label{eq:energy_sigma}
\end{equation}
Hence,
\begin{equation}
    \langle E \rangle_\rho - \langle E \rangle_{\sigma^*} \;=\; R\bigl(\langle E \rangle_\rho - \langle E \rangle_{\tau_\beta}\bigr) = R\,\Delta E_\beta(\rho),
    \label{eq:energy_diff}
\end{equation}
where $\Delta E_\beta(\rho) := \langle E \rangle_\rho - \langle E \rangle_{\tau_\beta} > 0$ by assumption.

(iii) Free-energy difference:
Since $\langle E \rangle_{\tau_\beta} = \mathrm{Tr}(\tau_\beta H) = -\frac{\partial}{\partial\beta}\ln \mathrm{Tr}(e^{-\beta H})$ holds identically,
\begin{align}
    F(\rho) - F(\sigma^*)
    &= \bigl(\langle E \rangle_\rho - \langle E \rangle_{\sigma^*}\bigr) + k_B T\bigl(S(\sigma^*) - S(\rho)\bigr) \notag \\
    &\leq  R\,\Delta E_\beta(\rho) + k_B T\left[\sqrt{\frac{-\ln(1-R)}{2}}\;\ln(d-1)\;+\;h_2\!\left(\sqrt{\frac{-\ln(1-R)}{2}}\right)\right]
    \label{eq:free_energy_diff}
\end{align}
for $R\in \left[0, 1-\exp{\Big(-2\big(\frac{d-1}{d}\big)^2\Big)}\right]$.

(iv) Work upper bound:
By definition of $W_{\min}$, we have $W_{\min} \leq F(\rho) - F(\sigma^*)$. Combining this with Eq.~\eqref{eq:free_energy_diff} yields
\begin{equation}
    W_{\min}(\mathcal{E}\to\rho) \;\leq\;  R\,\Delta E_\beta(\rho) + k_B T\left[\sqrt{\frac{-\ln(1-R)}{2}}\;\ln(d-1)\;+\;h_2\!\left(\sqrt{\frac{-\ln(1-R)}{2}}\right)\right]
    \label{eq:main_bound}
\end{equation}
for $R\in \left[0,\, 1-\exp\!\Bigl(-2\bigl(\tfrac{d-1}{d}\bigr)^2\Bigr)\right]$. This is the desired upper bound expressed in terms of the TAR $R = \rR_{\mathcal{E}}^{\text{th}}(\rho)$.

\section{$\rD^{TO}_{\mathcal{E}}$ and $\rR_{\mathcal{E}}^{\text{th}}$ under Noise $\epsilon$}
\label{Appendix D}

\noindent\textbf{ (1) $\rD^{TO}_{\mathcal{E}}$ under Noise $\epsilon$:}
We determine the state $\sigma^* \in \mathcal{R}^{TO}(\mathcal{E})$ that minimizes the relative entropy distance to the target Bell state $\rho = |\Phi^+\rangle\langle\Phi^+|$.
Since $\rD(\rho\|\sigma) = \sum_i \rho_i(\log\rho_i - \log\sigma_i)$ and $\rho$ has support only on the extremal energy levels $|00\rangle$ ($E=-\hbar\omega$) and $|11\rangle$ ($E=+\hbar\omega$), with $\rho_{00}=\rho_{11}=1/2$ and $\rho_{01}=\rho_{10}=0$, the relative entropy reduces to
\begin{equation}\label{eq:D_reduced}
    \rD(\rho\|\sigma) = -\frac{1}{2}\log\sigma_{00} - \frac{1}{2}\log\sigma_{11} + \log 2.
\end{equation}
Thus, minimizing $\rD(\rho\|\sigma)$ is equivalent to maximizing the product $\sigma_{00}\sigma_{11}$ under the thermo-majorization constraint $\sigma \prec_{\text{thermo}} \omega$ for some $\omega \in \mathrm{conv}(\mathcal{E})$, where $\mathcal{E} = \{\tau_\beta, \sigma_\epsilon\}$.

The two-qubit Hamiltonian is $H = \frac{\hbar\omega}{2}(\sigma_z^{(1)}+\sigma_z^{(2)})$, giving energy eigenvalues $E_{00}=-\hbar\omega$, $E_{01}=E_{10}=0$, $E_{11}=+\hbar\omega$.
The thermal state $\tau_\beta = \tau_\beta^{(1)}\otimes\tau_\beta^{(2)}$ has eigenvalues (in energy-sorted basis $\{|00\rangle,|01\rangle,|10\rangle,|11\rangle\}$)
\begin{equation}
    \vec{\tau} = \frac{1}{Z}\bigl(e^{x},\,1,\,1,\,e^{-x}\bigr), \qquad Z = 4\cosh^2(x/2) = e^x + 2 + e^{-x} = e^x(1+e^{-x})^2,
\end{equation}
where $x = \beta\hbar\omega$. Hence
\begin{equation}
    \tau_{00} = \frac{e^x}{Z} = \frac{1}{(1+e^{-x})^2}, \qquad \tau_{11} = \frac{e^{-x}}{Z} = \frac{e^{-x}}{e^x(1+e^{-x})^2} = \tau_{00}\,e^{-2x}.
\end{equation}
The depolarized state is $\sigma_\epsilon = (1-\epsilon)\rho + (\epsilon/4)\mathbb{I}_4$, with
\begin{equation}
    \vec{\sigma}_\epsilon = \bigl(a,\,\epsilon/4,\,\epsilon/4,\,a\bigr), \qquad a = \frac{1}{2} - \frac{\epsilon}{4} = \frac{2-\epsilon}{4}.
\end{equation}

Any convex combination $\omega(p) = p\tau_\beta + (1-p)\sigma_\epsilon$ ($p\in[0,1]$) has
\begin{equation}
    \omega_{00}(p) = p\,\tau_{00} + (1-p)\,a, \qquad \omega_{11}(p) = p\,\tau_{11} + (1-p)\,a.
\end{equation}

The thermo-majorization condition $\sigma \prec_{\text{thermo}} \omega$ implies, in particular, the endpoint bounds
\begin{equation}\label{eq:thermo_bounds}
    \sigma_{00} \le \omega_{00}(p), \qquad \sigma_{11} \le \omega_{11}(p),
\end{equation}
since the cumulative ratio curve of $\sigma$ must lie below that of $\omega$ at the lowest- and highest-energy cutoffs. Consequently, for any feasible $\sigma$,
\begin{equation}
    \sigma_{00}\sigma_{11} \le \omega_{00}(p)\,\omega_{11}(p) \equiv f(p).
\end{equation}

The function $f(p) = \bigl[a + p(\tau_{00}-a)\bigr]\bigl[a + p(\tau_{11}-a)\bigr]$ is a convex quadratic in $p$ (because $\tau_{00} > a > \tau_{11}$ for all $x>0$), so its maximum over $p\in[0,1]$ is attained at an endpoint:
\begin{equation}
    f(0) = a^2, \qquad f(1) = \tau_{00}\tau_{11} = \frac{1}{Z^2} = \frac{1}{e^{2x}(1+e^{-x})^4}.
\end{equation}
For all $x\ge 0$, $Z \ge 4$, so $f(1) \le 1/16$. Meanwhile $a = (2-\epsilon)/4 \ge 1/4$ for $\epsilon \le 1$, giving $a^2 \ge 1/16$. Hence $f(0) \ge f(1)$ for all $x\ge 0$, and the global maximum is always
\begin{equation}
    \max_{p\in[0,1]} f(p) = a^2 \quad \text{at } p^* = 0.
\end{equation}

Therefore, the unique optimizer within $\mathcal{R}^{TO}(\mathcal{E})$ is
\begin{equation}\label{eq:sigma_star}
    \sigma^* = \sigma_\epsilon = (1-\epsilon)|\Phi^+\rangle\langle\Phi^+| + \frac{\epsilon}{4}\mathbb{I}_4,
\end{equation}
which is \emph{independent of the inverse temperature parameter $x = \beta\hbar\omega$}. Substituting $\sigma_{00}=\sigma_{11}=a$ into \eqref{eq:D_reduced} yields the exact minimum:
\begin{equation}\label{eq:D_TO_final}
    \rD^{TO}_{\mathcal{E}}(\rho) = \rD(\rho\|\sigma_\epsilon) = \ln\frac{4}{4-3\epsilon},
\end{equation}
valid for all $x\ge 0$ and $\epsilon\in(0,1)$.

\noindent\textit{Physical interpretation.}---The Bell state's support lies exclusively on the extremal energy levels, while the Gibbs state exponentially suppresses the highest-energy population $\tau_{11}\propto e^{-2x}$ as $x$ increases. Any admixture of $\tau_\beta$ thus tightens the thermo-majorization cone and reduces the feasible upper bound on $\sigma_{11}$, making the depolarized state $\sigma_\epsilon$ the unique temperature-independent minimizer of the relative entropy distance.

\medskip
\noindent\textbf{ (2) $\rR_{\mathcal{E}}^{\text{th}}$ under Noise $\epsilon$:}
To compute the TAR $\rR_{\mathcal{E}}^{\text{th}}(\rho)$ of the target Bell state $\rho = |\Phi^+\rangle\langle\Phi^+|$ with respect to the supply set $\mathcal{E} = \{\tau_\beta, \sigma_\epsilon\}$, we determine the minimal fraction $R \in [0, 1]$ such that the mixture
\begin{equation}
    \sigma(R) = (1-R)\rho + R\tau_\beta
    \label{eq:thermal_decomp_app}
\end{equation}
lies within the thermal operations reachable set $\mathcal{R}^{\text{TO}}(\mathcal{E})$.

By synthesizing the thermo-majorization structure with the governing von~Neumann entropy non-decreasing constraint---equivalently, the no-cooling principle---we rigorously derive a closed-form inverse mapping that remains valid at arbitrary temperatures.In quantum resource theories, the feasibility of state transformations via thermal operations (TO), $\sigma_\epsilon \to \sigma(R)$, is conventionally evaluated using thermo-majorization criteria. However, our analysis incorporates several crucial logical transitions:

\begin{enumerate}
    \item Flaws of Single-Level Thermo-Majorization: Mechanistically applying the bound derived from the highest energy level ($\ket{11}$) yields
\begin{equation}
    R = \frac{\epsilon}{2} \left( \frac{e^x + 2 + e^{-x}}{e^x + 2 - e^{-x}} \right).
\end{equation}
Namely, at low temperatures it is confirmed that $R < \epsilon$. This allows the output mixture $\sigma(R)$ to possess a lower entropy than the source state $\sigma_\epsilon$, thereby violating the no-cooling and no-purification principles (the Second Law of Thermodynamics).
    \item Dominance of the Entropy Constraint: To prevent unauthorized state purification or refrigeration, the von~Neumann entropy non-decreasing condition $S(\sigma(R)) \ge S(\sigma_\epsilon)$ is enforced as the governing physical constraint, overriding local majorization bounds.
\end{enumerate}

Consider a bipartite two-qubit quantum system governed by a local non-interacting Hamiltonian:
\begin{equation}
    H = \frac{\hbar\omega}{2}\left(\sigma_z^{(1)} + \sigma_z^{(2)}\right).
\end{equation}
For an arbitrary non-zero inverse temperature parameter $x = \beta\hbar\omega > 0$, the single-system thermal partition function $Z(x) = \mathrm{Tr}(e^{-\beta H})$ evaluates strictly without approximation to:
\begin{equation}
    Z(x) = e^x + 2 + e^{-x} = 4\cosh^2\left(\frac{x}{2}\right).
\end{equation}
The underlying thermal state $\tau_\beta = \frac{1}{Z(x)} e^{-\beta H}$ in the computational basis $\{|00\rangle, |01\rangle, |10\rangle, |11\rangle\}$ takes the diagonal form $\tau_\beta = \frac{1}{Z(x)}\text{diag}(e^x, 1, 1, e^{-x})$.

The target isotropic depolarized noisy state is defined as $\sigma_\epsilon = (1-\epsilon)\rho + \frac{\epsilon}{4}\mathbb{I}_4$, where $\rho = |\Phi^+\rangle\langle\Phi^+|$ is the maximally entangled Bell state with $|\Phi^+\rangle = \frac{1}{\sqrt{2}}(|00\rangle + |11\rangle)$. The exact $4 \times 4$ von Neumann entropy $S(\sigma_\epsilon) \equiv -\mathrm{Tr}(\sigma_\epsilon \ln \sigma_\epsilon)$ in natural units (nats) depends purely on the operational noise parameter $\epsilon \in (0, 1)$:
\begin{equation}
    S_0(\epsilon) = -\left(1 - \frac{3\epsilon}{4}\right)\ln\left(1 - \frac{3\epsilon}{4}\right) - \frac{3\epsilon}{4}\ln\left(\frac{\epsilon}{4}\right).
    \label{eq:S0_exact_full}
\end{equation}

To construct the convex resource mixture $\sigma(R) = (1-R)|\Phi^+\rangle\langle\Phi^+| + R \tau_\beta$, we apply an exact unitary basis transformation to the orthonormal Bell basis $\{|\Phi^+\rangle, |\Phi^-\rangle, |\Psi^+\rangle, |\Psi^-\rangle\}$. In this representation, the thermal density matrix $\tau_\beta$ exhibits off-diagonal coherence strictly between $|\Phi^+\rangle$ and $|\Phi^-\rangle$ due to the unequal energy populations of $|00\rangle$ and $|11\rangle$:
\begin{equation}
    \langle\Phi^\pm|\tau_\beta|\Phi^\pm\rangle = \frac{\cosh x}{Z(x)}, \qquad
    \langle\Phi^+|\tau_\beta|\Phi^-\rangle = \frac{\sinh x}{Z(x)}.
\end{equation}

Consequently, $\sigma(R)$ block-diagonalizes into an invariant $2\times2$ subspace spanning $\{|\Phi^+\rangle, |\Phi^-\rangle\}$ and two isolated $1\times1$ diagonal subspaces:
\begin{equation}
    [\sigma(R)]_{\text{Bell}} = M(R, x) \oplus \left[ \frac{R}{Z(x)} \right] \oplus \left[ \frac{R}{Z(x)} \right],
\end{equation}
where the non-trivial $2\times2$ sub-block $M(R, x)$ reads
\begin{equation}
    M(R, x) = \begin{pmatrix}
        1 - R + R \dfrac{\cosh x}{Z(x)} & R \dfrac{\sinh x}{Z(x)} \\[8pt]
        R \dfrac{\sinh x}{Z(x)} & R \dfrac{\cosh x}{Z(x)}
    \end{pmatrix}.
\end{equation}

Diagonalizing $M(R, x)$ via its trace $\mathrm{Tr}(M) = 1 - R + \frac{2R\cosh x}{Z(x)}$ and determinant $\det(M) = \frac{R(1-R)\cosh x}{Z(x)} + \frac{R^2}{Z(x)^2}$ yields two non-trivial eigenvalues:
\begin{align}
    \lambda_1(R, x) &= \frac{1 - R + \frac{2R\cosh x}{Z(x)} + \sqrt{(1-R)^2 + \frac{4R^2\sinh^2 x}{Z(x)^2}}}{2}, \label{eq:eig1_exact}\\
    \lambda_2(R, x) &= \frac{1 - R + \frac{2R\cosh x}{Z(x)} - \sqrt{(1-R)^2 + \frac{4R^2\sinh^2 x}{Z(x)^2}}}{2}, \label{eq:eig2_exact}\\
    \lambda_3(R, x) &= \lambda_4(R, x) = \frac{R}{Z(x)}. \label{eq:eig34_exact}
\end{align}

Equations \eqref{eq:eig1_exact}--\eqref{eq:eig34_exact} form the complete, closed-form, non-perturbative spectrum of $\sigma(R)$ valid over the entire physical domain $x \in (0, \infty)$ and $R \in [0, 1]$.
The Second Law constraint enforcing non-decreasing von Neumann entropy, $S(\sigma(R); x) = S_0(\epsilon)$, establishes the fundamental functional relation:
\begin{equation}
    f_x(R) \equiv -\sum_{i=1}^4 \lambda_i(R, x) \ln \lambda_i(R, x) = S_0(\epsilon),
    \label{eq:governing_entropy_map}
\end{equation}
where $f_x(R): [0, 1] \to [0, \ln 4]$ denotes the continuous entropy mapping function at fixed inverse temperature parameter $x$.

Rigorous Formulation of Global Inverse Mapping:

To establish the mathematical existence and uniqueness of the solution of the above equation, we inspect the analytic properties of $f_x(R)$.

Let $R_1, R_2 \in [0, 1]$ such that $R_1 < R_2$. By convex combination, there exists some $\mu \in (0, 1)$ for which $\sigma(R_1) = (1-\mu)\rho + \mu \sigma(R_2)$. Since the von~Neumann entropy is strictly Schur-concave with respect to majorization and strictly concave under non-trivial convex mixing of distinct density operators, the inequality $S(\sigma(R_1)) < S(\sigma(R_2))$ holds strictly. Consequently, $\partial f_x(R) / \partial R > 0$ is strictly satisfied over the entire domain.

Therefore, for any fixed $x > 0$, the entropy mapping function $f_x(R)$ is strictly monotonically increasing on $R \in (0, 1]$. Namely,
\begin{equation}
    \frac{\partial f_x(R)}{\partial R} > 0 \quad \forall R \in (0, 1], \; x > 0.
\end{equation}

Assuming that $f_x(R)$ is strictly monotone and continuous over $R \in [0, 1]$ with range $f_x([0, 1]) = [0, S(\tau_\beta)]$, the Inverse Function Theorem guarantees the existence of a unique, continuous, single-valued inverse function $f_x^{-1}: [0, S(\tau_\beta)] \to [0, 1]$. However, for values of $\epsilon$ for which this inverse would exceed 1, the TAR $R(x, \epsilon)$ is strictly capped at $R \equiv 1$ to reflect that $\mathcal{E}$ includes the thermal equilibrium state. Thus, $R(x, \epsilon)$ is then strictly defined across the entire domain $x > 0$ and $\epsilon \in (0, 1)$, $R(x, \epsilon)$ is rigorously defined as:
\begin{equation}
    \rR_{\mathcal{E}}^{\text{th}}(\rho) \equiv \min\left\{1, \ f_x^{-1}\left[-(1 - \frac{3\epsilon}{4})\ln(1 - \frac{3\epsilon}{4}) - \frac{3\epsilon}{4}\ln(\frac{\epsilon}{4}) \right]\right\}.
    \label{eq:final_inverse_robustness}
\end{equation}

We rigorously verify that the implicit inverse formulation $\rR_{\mathcal{E}}^{\text{th}}(\rho) = f_x^{-1}(S_0(\epsilon))$ upholds all required physical boundary laws without introducing artificial assumptions:

High-Temperature Convergence: As $x \to 0^+$, we evaluate $Z(0) = 4$, $\cosh(0) = 1$, and $\sinh(0) = 0$. Substituting these values into Eqs.~\eqref{eq:eig1_exact}--\eqref{eq:eig34_exact} yields the collapsed isotropic spectrum:
\begin{equation}
    \lambda_1(R, 0) = 1 - \frac{3R}{4}, \quad \lambda_2(R, 0) = \lambda_3(R, 0) = \lambda_4(R, 0) = \frac{R}{4}.
\end{equation}
The mapping function reduces to $f_0(R) = -\left(1 - \frac{3R}{4}\right)\ln\left(1 - \frac{3R}{4}\right) - \frac{3R}{4}\ln\left(\frac{R}{4}\right)$. Equating $f_0(R) = S_0(\epsilon)$ from Eq.~\eqref{eq:S0_exact_full} and applying $f_0^{-1}$ yields the identity $R(0, \epsilon) = \epsilon$ identically.

Temperature Monotonicity: Define the implicit bivariate function $F(R, x) \equiv f_x(R) - S_0(\epsilon) = 0$. By the Implicit Function Theorem:
\begin{equation}
    \frac{\partial R}{\partial x} = -\frac{\partial F / \partial x}{\partial F / \partial R} = -\frac{\partial f_x(R) / \partial x}{\partial f_x(R) / \partial R}.
\end{equation}
The denominator $\partial f_x(R) / \partial R > 0$ is strictly positive. For the numerator, as $x$ increases (temperature drops), the thermal state $\tau_\beta$ becomes increasingly purer (concentrating population into the ground state $|00\rangle$), thereby strictly decreasing its inherent entropy capacity: $\partial f_x(R) / \partial x < 0$. Thus, the negative ratio yields:
\begin{equation}
    \frac{\partial R}{\partial x} = -\frac{\text{negative}}{\text{positive}} > 0,
\end{equation}
rigorously proving that $\rR_{\mathcal{E}}^{\text{th}}(\rho)$ strictly increases with $x$ across all physical regimes to compensate for the entropy deficit of colder thermal states.

Therefore, the global function $\rR_{\mathcal{E}}^{\text{th}}(\rho) = f_x^{-1}(S_0(\epsilon))$ satisfies the following exact properties:
\begin{enumerate}
    \item Infinite-Temperature Convergence ($x \to 0^+$):
    \begin{equation}
        \lim_{x \to 0^+} \rR_{\mathcal{E}}^{\text{th}}(\rho) = \epsilon \quad \forall \epsilon \in (0, 1).
    \end{equation}
    \item Strict Temperature Monotonicity ($\partial R / \partial x > 0$):
    \begin{equation}
        \frac{\partial \rR_{\mathcal{E}}^{\text{th}}(\rho)}{\partial x} > 0 \quad \forall x > 0, \; \epsilon \in (0, 1).
    \end{equation}
\end{enumerate}

\end{widetext}

\end{document}